\def\arxiv{} 
\let\emptyset\varnothing
\theoremstyle{definition} 
\newtheorem{x}{X} 
\newtheorem{y}{Y} 
\newtheorem{definition}[x]{Definition}
\newtheorem{lemma}[x]{Lemma} 
\newtheorem{theorem}[x]{Theorem}
\newtheorem{idea}[y]{Idea}
\providecommand{\Language}[1]{\ensuremath{\mathcal{L}(#1)}}
\newcommand{\cover}[1]{\ensuremath{\mathbf{#1}}}
\providecommand{\ssl}[1]{{\scriptsize\textsf{#1}}}
\newcommand*{\probleminternal}[4]{
	\par
	\medskip
	\noindent\fbox{\parbox{0.98\columnwidth}{
		\textbf{#4:} {#1} \\[0.05in]
		\renewcommand{\tabcolsep}{2pt}
		\begin{tabularx}{\linewidth}{rX}
			\emph{Input:} & #2 \\
			\emph{Output:} & #3
		\end{tabularx}
	}}
	\par
	\medskip
	\par
}
\let\emptyset\varnothing
\newcommand*{\ourproblem}[3]{\probleminternal{#1}{#2}{#3}{Problem}}
\providecommand{\defemp}[1]{\emph{#1}} 
\newcommand{\pfm}{{\sc pfm}\xspace}
\newcommand{\spfm}{{\sc so-fm}\xspace}
\newcommand{\mpfm}{{\sc mo-fm}\xspace}
\providecommand{\reachedvf}[3]{\ensuremath{\mathcal{V}_{#2}(#1, #3)}}
\providecommand{\reachedv}[2]{\ensuremath{\mathcal{V}(#1, #2)}}
\providecommand{\reachedc}[2]{\ensuremath{\mathcal{C}(#1, #2)}}
\providecommand{\reaching}[2]{\ensuremath{\mathcal{S}^{#1}_{#2}}}
\providecommand{\extensions}[2]{\ensuremath{\mathcal{L}_{#1}(#2)}}
\newcommand{\Aux}{\ensuremath{\textsc{Zip}}}
\providecommand{\compatibilitygraph}[1]{\ensuremath{\mathcal{K}(#1)}}
\providecommand{\compatible}{\ensuremath{\sim_{c}}}
\providecommand{\auxes}[1]{\ensuremath{\mathscr{Z}(#1)}}
\providecommand{\auxtext}{zipper constraint\xspace}
\providecommand{\auxestext}{zipper constraints\xspace}
\providecommand{\gmca}{{\sc gmczc}\xspace}
\providecommand{\mcca}{{\sc mcczc}\xspace}
\providecommand{\sde}[1]{\ensuremath{\operatorname{\textsc{Sde}}({#1})}}
\providecommand{\negation}[1]{\ensuremath{\overline{#1}}}
\providecommand{\proofinextension}{\textit{{The proof can be found in the
extended version~\cite{zhang2020Cover}.}}}
\newcommand\blfootnote[1]{%
  \begingroup
    \renewcommand{\thefootnote}{} 
    \footnotetext{#1}
    \renewcommand{\thefootnote}{\arabic{footnote}}
  \endgroup
}
\newcommand\shortenXor[2]{\ifdefined\arxiv #1\else #2\fi}
\newcommand\shortvspace[1]{\ifdefined\arxiv\else \vspace*{#1}\fi}
\newcounter{tecounter}
\newenvironment{tightenumerate}
{
    \begin{list}{\arabic{tecounter}\addtocounter{tecounter}{1})}{%
    \setcounter{tecounter}{1}
        \setlength{\leftmargin}{08pt}
        \setlength{\topsep}{1pt}
        \setlength{\partopsep}{0pt}
        \setlength{\itemsep}{2pt}
        \setlength\labelwidth{0pt}}
        \ignorespaces}
{\unskip\end{list}}
\DeclareFontFamily{U}{mathb}{\hyphenchar\font45}
\DeclareFontShape{U}{mathb}{m}{n}{<5> <6> <7> <8> <9> <10> gen * mathb
<10.95> mathb10 <12> <14.4> <17.28> <20.74> <24.88> mathb12}{}
\DeclareSymbolFont{mathb}{U}{mathb}{m}{n}
\DeclareMathSymbol{\rcirclearrow}{0}{mathb}{'367}
\begin{document}

\title{Cover Combinatorial Filters and their Minimization Problem \shortenXor{(Extended Version)}{}}

\author{Yulin Zhang \and Dylan A. Shell}
\institute{Department of Computer Science \& Engineering,\\ Texas A\&M University, College Station TX 77843, USA\\
\email{{yulinzhang}|{dshell}@tamu.edu}}
\ifdefined\arxiv
	\titlerunning{Cover Combinatorial Filters and their Minimization Problem}
\fi
\maketitle

\shortvspace{-24pt}
\begin{abstract}
Recent research has examined algorithms to minimize robots' resource footprints.
The class of combinatorial filters (discrete variants of widely-used
probabilistic estimators) has been studied and methods for reducing their space
requirements introduced.  This paper extends existing combinatorial filters by 
introducing a natural generalization\shortenXor{ that we dub}{:} cover combinatorial filters. 
In addressing the new\,---but still NP-complete---\,problem of minimization of
cover filters, \shortenXor{this paper shows}{we show} that multiple concepts previously believed\shortenXor{to be true}{} about combinatorial filters (and actually conjectured, claimed, or
assumed to be) are in fact false.  For instance, minimization does not induce an
equivalence relation.  We give an exact algorithm for the cover filter
minimization problem.  Unlike prior work (based on graph coloring)
we consider a type of clique-cover problem, involving a new conditional
constraint, from which we can find more general relations.  
In addition to solving the more general problem, the algorithm also corrects flaws present in all prior filter reduction methods.  
In employing SAT, the algorithm
provides a promising basis for future practical development.
\end{abstract}
\shortvspace{-14pt}

\blfootnote{This work was supported by the NSF through awards
 \href{http://nsf.gov/awardsearch/showAward?AWD_ID=1453652}{IIS-1453652} and
 \href{http://nsf.gov/awardsearch/showAward?AWD_ID=1527436}{IIS-1527436}.
 }

\section{Introduction}

As part of the long history of research in robotic minimalism, a recent
thread has devised methods that aim to automatically reduce and reason about
robots' resource footprints.
That work fits within the larger context of
methodologies and formalisms for tackling robot design problems, being useful
for designing robots subject to resource
limits\,\cite{censi17co,pervan2018low,saberifar18hardness}.  But, more
fundamentally, the associated algorithms also help identify the information
requirements of certain robot tasks.  The methods have the potential to provide
insights about the interplay of sensing, state, and actuation within the context
of particular tasks.  One class of objects where the problem of resource
minimization can be clearly posed is in the case of combinatorial
filters\,\cite{lavalle10sensing}.  These are discrete variants of the
probabilistic estimators and recursive Bayesian
filters widely adopted for practical use in robots.  Combinatorial filters
process a stream of discrete sensor inputs and integrate information via
transitions between states. The natural question, studied
in\,\cite{o2017concise}, then is: \emph{How few states are needed to realize
specified filter functionality?}  In this paper, we define a more general class
of filters and ask the same question.



We start with a simple motivating scenario where the generalization we
introduce is exactly what is needed.  Figure~\ref{fig:motivation_scenario}
shows a driving drone patrolling a house.\footnote{Such bizarre chimera robots
are not our invention, e.g., see the Syma X9 Flying Car.} The drone can either drive or
fly, but its choice must satisfy navigability constraints. Its wheels can't
drive on grass (\ssl{F} and \ssl{Y}) nor in the pantry (\ssl{P}), owing to spills.
Spinning propellers, on the other hand, will disturb the tranquil
bedroom~(\ssl{B}).  Otherwise, either means may be chosen (see inset map pair
marking regions in brown/blue for driving/flying).  The robot is equipped with
an ambient light sensor that is useful because the living room and kitchen are
lighter than the bedroom and pantry, while the outdoors is lightest of all.

\begin{figure}[t]
\centering
\includegraphics[scale=0.40]{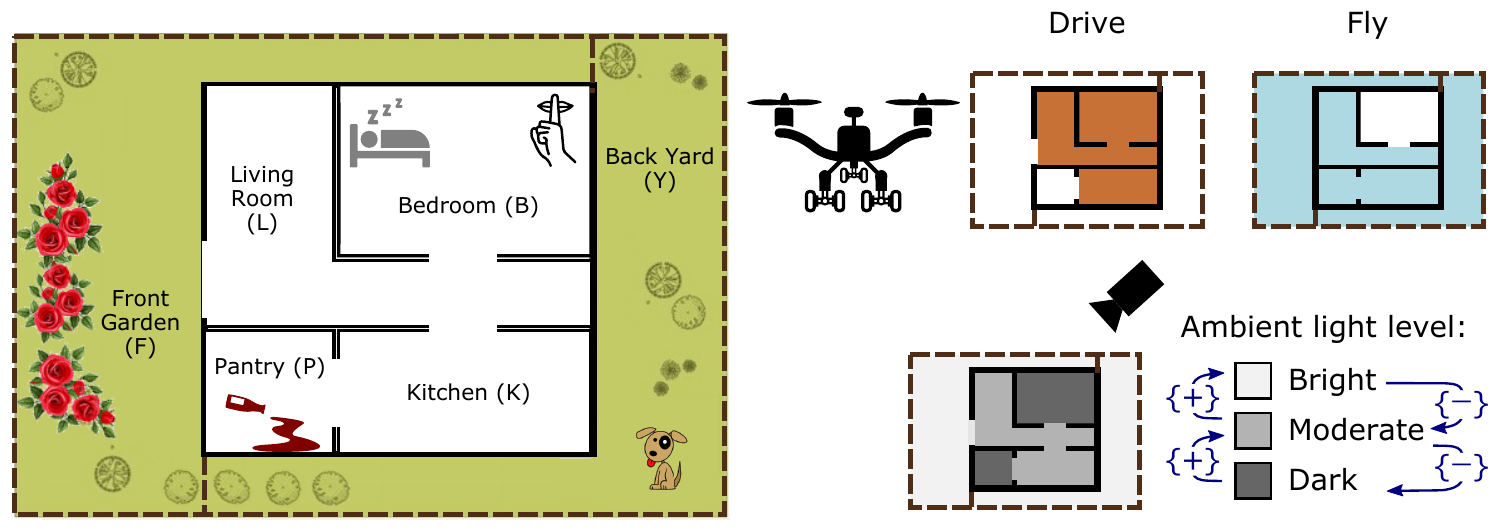}
\caption{A hybrid drone with a light sensor monitors a home. The robot is
capable of driving or flying: the grass outdoors (\ssl{F} and \ssl{Y}) and
liquids~(\ssl{P}) necessitate that it be airborne; the noise requirement means
it may only drive within the bedroom~(\ssl{B}).  It has a light sensor that
distinguishes three levels of ambient brightness.  Changes in brightness
(increasing `$+$', decreasing `$-$', same `$=$') provide the robot cues about
its location.  
\label{fig:motivation_scenario}}
\shortvspace{-20pt}
\end{figure}

\begin{figure}[ht]
\shortvspace{-20pt}
\begin{subfigure}[b]{0.30\textwidth}
\includegraphics[scale=0.5]{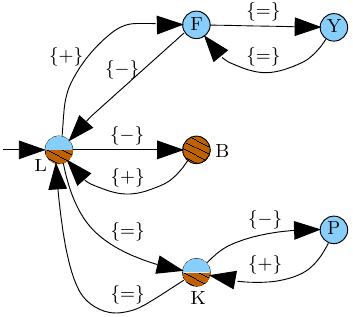}
\caption{A na\"\i ve filter, with one state per region, codifies all valid choices.\label{fig:motivation_input}}
\end{subfigure}
\hspace{0.5cm}
\begin{subfigure}[b]{0.29\textwidth}
\includegraphics[scale=0.5]{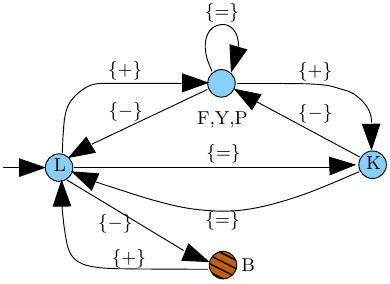}
\caption{A minimal filter when choosing to fly in the living room (\ssl{L}) and
kitchen~(\ssl{K}).\label{fig:motivation_myopic}}
\end{subfigure}
\hspace{0.5cm}
\begin{subfigure}[b]{0.30\textwidth}
\includegraphics[scale=0.5]{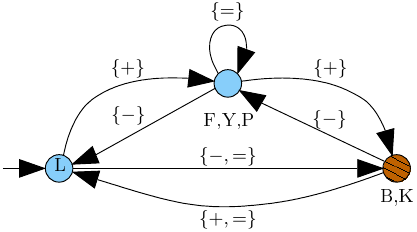}
\shortvspace{5mm}
\caption{A minimal filter when opting to fly in the living room (\ssl{L}) but not 
the kitchen~(\ssl{K}).\label{fig:motivation_min}}
\end{subfigure}
\caption{Combinatorial filters that tell the hybrid drone how to locomote.
The sequence of symbols {\footnotesize \,$\{$\ssl{$+$},\ssl{$-$},\ssl{$=$}$\}$\,} is traced on the graph,
and the color of the resultant state is the filter's output (blue for flight, brown for driving mode).
}
\shortvspace{-19pt}
\end{figure}

We wish to construct a filter for the drone to determine how to navigate, with
the inputs being brightness changes, and the filter's output providing some
valid mode of locomotion.  It is easy to give a valid filter by using one state
for each location\,---\,this na\"\i ve filter is depicted in
Figure~\ref{fig:motivation_input}.  In the living room and kitchen, the filter
lists two outputs since both modes are applicable there (both locations
are covered by the brown and blue choices).  Now consider the question of the smallest filter.
If we opt to fly in both the living room and kitchen, then the smallest filter
is shown in Figure~\ref{fig:motivation_myopic} with $4$ states.  But when
choosing to fly in the living room but drive in the kitchen, the minimal filter
requires only $3$ states (in Figure~\ref{fig:motivation_min}).

This last filter is also the globally minimal filter.  The crux is that states
with multiple valid outputs introduce a new degree of freedom which influences
the size of the minimal filter. These arise, for instance, whenever there are
`don't-care' options. The flexibility of such states must be retained to truly
minimize the number of states.

\shortvspace{-9pt}
\section{Preliminary Definitions and Problem Description}
\shortvspace{-8pt}

To begin, we define the filter minimization problem in the most general form,
where the input is allowed to be non-deterministic and each state may 
have multiple outputs. This is captured by the procrustean filter (p-filter)
formalism~\cite{setlabelrss}.

\shortvspace{-8pt}
\subsection{P-filters and their minimization}
\shortvspace{-4pt}

We firstly introduce the notion of p-filter:
\begin{definition}[procrustean filter~\cite{setlabelrss}]
A \defemp{procrustean filter}, \defemp{p-filter} or \defemp{filter} for short,
is a tuple $(V, V_0, Y, \tau, C, c)$ with:
\begin{tightenumerate}
\item a finite set of states $V$, a non-empty initial set of states
$V_0\subseteq V$, and a set of possible observations $Y$,
\item a transition function $\tau: V\times V\rightarrow 2^Y$,
\item a set $C$, which we call the output space, and 
\item an output function $c: V\to 2^{C}\setminus\{\emptyset\}$.
\end{tightenumerate}
\end{definition}
The states, initial states and observations for p-filter $F$ will be denoted 
$V(F)$, $V_0(F)$ and $Y(F)$. Without loss of generality, we will also treat a
p-filter as a graph with states as its vertices and transitions as directed
edges.

A sequence of observations can be traced on the p-filter:
\begin{definition}[reached]
Given any p-filter $F=(V, V_0, Y, \tau, C, c)$, a sequence of observations $s=y_1\dots y_n\in
Y^{*}$, and states $w_0, w_n\in V$, we say that $w_n$ is a state \defemp{reached by}
some sequence $s$ from $w_0$ in $F$ (or $s$ reaches $w_n$ from $w_0$), if there
exists a sequence of states $w_0, \dots, w_{n}$ in $F$, such that $\forall i\in
\{1, \dots, n\}, y_i\in \tau(w_{i-1}, w_i)$. We denote the set of all states
reached by $s$ from state $w_0$ in $F$ as $\reachedvf{F}{w_0}{s}$.
For simplicity, we use $\reachedv{F}{s}$, without the subscript, to denote the
set of all states reached when starting from any state in $V_0$, i.e.,
$\reachedv{F}{s}=\cup_{v_0\in V_0} \reachedvf{F}{v_0}{s}$.  Note that
$\reachedv{F}{s}=\emptyset$ holds only when sequence $s$ \emph{crashes} in $F$
starting from $V_0$.
\end{definition} 

For convenience, we will denote the set of sequences reaching state $v\in V$
from some initial state by $\reaching{F}{v}$.

\begin{definition}[extensions, executions and interaction language]
\label{def:ext}
An \defemp{extension} of a state $v$ on a p-filter $F$ is a finite sequence of
observations $s$ that does not crash when traced from $v$, i.e.,
$\reachedvf{F}{v}{s}\neq\emptyset$. An \defemp{extension} of any initial state 
$v_0\in V_0(F)$ is also called an \defemp{execution} or a \defemp{string} on
$F$. The set of all extensions of a state $v$ on $F$ is called the
\defemp{extensions} of $v$, written as $\extensions{F}{v}$. The
extensions of all initial vertices on $F$ is also called the \defemp{interaction
language} (or, briefly, just \defemp{language}) of $F$, and is written
$\Language{F}=\cup_{v_0\in V_0(F)} \extensions{F}{v_0}$.
\end{definition}

Note in particular that the empty string $\epsilon$ belongs to the extensions of
any state on the filter, and belongs to the language of the filter as well.

\begin{definition}[filter output]
Given any p-filter $F=(V, V_0, Y, \tau, C, c)$, a string $s$ and an output $o\in
C$, we say that $o$ is a \defemp{filter output} with input string $s$, if $o$ is an
output from the state reached by $s$, i.e., $o\in \cup_{v\in \reachedv{F}{s}} c(v)$.
We denote the set of all filter outputs for string $s$ as
$\reachedc{F}{s}=\cup_{v\in \reachedv{F}{s}} c(v)$.
\end{definition}
Specifically, for the empty string $\epsilon$, we have
$\reachedc{F}{\epsilon}=\cup_{v_0\in V_0(F)} c(v_0)$.

\begin{definition}[output simulating]
Given any p-filter $F$, a p-filter $F'$ \defemp{output simulates} $F$ if
$\forall s\in \Language{F}$, $\reachedc{F'}{s}\neq \emptyset$ and
$\reachedc{F'}{s}\subseteq \reachedc{F}{s}$.
\end{definition}
Plainly in words: for one p-filter to output simulate another, it has to generate some of
the outputs of the other, for every string the other admits.


We are interested in practicable p-filters with deterministic behavior: 
\begin{definition}[deterministic]
A p-filter $F=(V, V_0, Y, \tau, C, c)$ is \defemp{deterministic} or
\defemp{state-determined}, if $|V_0|=1$, and for every $v_1, v_2, v_3\in V$ with
$v_2\neq v_3$, $\tau(v_1, v_2)\cap \tau(v_1, v_3)=\emptyset$.
\end{definition}
Any non-deterministic p-filter $F$ can be state-determinized, denoted $\sde{F}$,
following Algorithm~$2$ in \cite{saberifar18pgraph}. 
 
Then the p-filter minimization problem can be formalized as follows:
\ourproblem{\textbf{P-filter Minimization (\pfm)}}
{A deterministic p-filter $F$.}
{A deterministic p-filter $F^{\dagger}$ with fewest states, such that
$F^{\dagger}$ output simulates $F$.
}
`{\sc pf}' denotes that the input is a p-filter, and `{\sc m}' denotes that we
are interested in finding a deterministic minimal filter as a solution.

\shortvspace{-12pt}
\subsection{Complexity of p-filter minimization problems}
\shortvspace{-3pt}

\begin{definition}[state single-outputting and multi-outputting] 
A p-filter $F=(V, V_0, Y, \tau, C, c)$ is \defemp{state single-outputting} or
\defemp{single-outputting} for short, if $c$ only maps to singletons, i.e., $|c(v)|=1, \forall v \in
V(F)$. Otherwise, we say that $F$ is \defemp{multi-outputting}. 
\end{definition}

Depending on whether the state in the input p-filter ({\sc pf}) is
single-outputting ({\sc so}) or multi-outputting ({\sc mo}), we further
categorize the problem \pfm into the following problems: \spfm and \mpfm.


\begin{lemma}
\label{lm:fm}
The filter minimization problem {\sc fm} of \cite{o2017concise} is \spfm, and
\spfm is NP-Complete (Theorem~$2$ in \cite{o2017concise}).  
\end{lemma}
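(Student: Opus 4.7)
The plan is to split the lemma into two claims and dispatch them in sequence: (i) that the filter minimization problem {\sc fm} of \cite{o2017concise} and \spfm denote the same computational problem, and (ii) that this common problem is NP-Complete.

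For (i), I would first observe that the input of {\sc fm} is a deterministic filter whose states each carry a single label/color; this is literally a deterministic single-outputting p-filter in the formalism above, with the color set playing the role of $C$ and $c$ mapping each state to its one-element color set. The work of the identification therefore lies in the correctness criterion. I would specialize output simulation to deterministic single-outputting filters: for any $s \in \Language{F}$, determinism forces $\reachedv{F}{s}$ to be a singleton $\{v\}$, so $\reachedc{F}{s} = c(v)$ is a singleton $\{o_s\}$; similarly $\reachedc{F^{\dagger}}{s}$ is either empty or a singleton. The requirement $\emptyset \neq \reachedc{F^{\dagger}}{s} \subseteq \reachedc{F}{s}$ then forces $\reachedc{F^{\dagger}}{s} = \{o_s\}$. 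Thus $F^{\dagger}$ is defined on every string in $\Language{F}$ and produces precisely the color produced by $F$, which is exactly the correctness requirement of {\sc fm}.

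For (ii), NP-hardness is transferred directly from Theorem~2 of \cite{o2017concise}: the reduction given there produces an {\sc fm} instance which, by the identification in (i), is an \spfm instance with an identical optimum. For NP membership, given a candidate $F^{\dagger}$ with $|V(F^{\dagger})| \leq k$ I would build the product automaton $F \times F^{\dagger}$, whose states $(v,v^{\dagger})$ track the unique reached pair under any input string; this has polynomial size since both filters are deterministic. A polynomial-time traversal then verifies the two local conditions at every reachable product state: $c(v^{\dagger}) \subseteq c(v)$ (singleton equality in the single-outputting setting), and whenever $\tau_F(v,\cdot)$ admits an observation $y$, so does $\tau_{F^{\dagger}}(v^{\dagger},\cdot)$—ensuring no string of $\Language{F}$ crashes in $F^{\dagger}$.

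The only real obstacle is the specialization step: one must check that the subtle features of the p-filter definition—the asymmetric $\subseteq$ in output simulation, the non-crashing clause, and the fact that $F^{\dagger}$ is unconstrained on strings outside $\Language{F}$—all collapse, under determinism and singleton outputs, to the symmetric output-agreement condition assumed by {\sc fm}. Once this specialization is nailed down, the rest is routine.
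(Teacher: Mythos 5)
Your proposal is correct, but it is worth noting that the paper offers no proof of this lemma at all: the statement is asserted as a definitional identification together with a citation of Theorem~2 of \cite{o2017concise}, and the only verification-related remark in the paper appears later, in the NP-completeness proof for \mpfm, where membership is handled by pointing to Algorithm~1 of \cite{o2017concise} with its equality check on line~7. Your write-up therefore does genuinely more work than the source. The substantive addition is part (i): the observation that for a deterministic input, $\reachedc{F}{s}$ is the singleton $c(v)$, and that the asymmetric condition $\emptyset \neq \reachedc{F^{\dagger}}{s} \subseteq \reachedc{F}{s}$ then forces exact agreement, so output simulation collapses to the symmetric correctness criterion of {\sc fm}. (One small imprecision: determinism of $F^{\dagger}$ alone does not make $\reachedc{F^{\dagger}}{s}$ a singleton, since $c$ may map a state to a larger set; it is the nonempty-subset-of-a-singleton condition that forces this. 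Your conclusion is unaffected.) Your product-automaton argument for NP membership is sound for deterministic filters and is essentially a self-contained version of the paper's appeal to the cited verification algorithm. In short, the two "approaches" differ only in that yours is an actual proof where the paper relies on the reader accepting the identification by inspection; nothing in your argument conflicts with what the paper assumes.
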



\begin{theorem}
\mpfm is NP-Complete.
\end{theorem}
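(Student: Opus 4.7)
The plan is to argue membership in NP and then NP-hardness via a trivial reduction from \spfm, whose NP-completeness is Lemma~\ref{lm:fm}.

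For NP membership, the certificate is a deterministic p-filter $F^\dagger$ with at most $k$ states, which has size polynomial in the input. I would verify output simulation by a synchronous product construction: since both $F$ and $F^\dagger$ are deterministic, each string reaches at most one state on each side, so the set of reachable pairs $(v,v')$ is computable in time polynomial in $|V(F)|\cdot|V(F^\dagger)|\cdot|Y|$. Output simulation then reduces to two polynomial-time checks on this product: (i) for every reachable pair, the output set assigned to $v'$ in $F^\dagger$ is contained in the one assigned to $v$ in $F$; and (ii) every observation enabled at $v$ in $F$ is also enabled at the paired $v'$ in $F^\dagger$, so that no string in $\Language{F}$ crashes in $F^\dagger$.

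For NP-hardness, I reduce \spfm to \mpfm by mapping an instance to itself, viewed now as an \mpfm instance. The crux is that the optimum state count is preserved. One direction is immediate: any single-outputting solution is also a valid multi-outputting one. For the converse, let $F^\dagger$ be an optimal \mpfm solution for a single-outputting input $F$; without loss of generality every state of $F^\dagger$ is reachable from its initial state (else it could be deleted without affecting output simulation). For any such reachable $v'$, pick a string $s$ with $v'\in \reachedv{F^\dagger}{s}$. Since $F$ is deterministic and single-outputting, $\reachedc{F}{s}$ is a singleton, and output simulation forces $c(v')\subseteq\reachedc{F}{s}$, hence $|c(v')|=1$. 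Therefore $F^\dagger$ is itself single-outputting and is a valid \spfm solution of the same size, so the two problems agree on this input.

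I do not anticipate a deeper obstacle; the only step requiring minor care is the ``remove unreachable states'' simplification, which must be made explicit so that the singleton-output conclusion applies at every state of the final $F^\dagger$. The substantive content is hardness inheritance from \spfm, together with a routine check that the multi-outputting generalization does not escape NP.
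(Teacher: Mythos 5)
Your proof is correct and follows essentially the same route as the paper: NP-hardness is inherited because \spfm is the special case of \mpfm with single-outputting inputs (and the two problems demand the same kind of output, so the identity reduction works even without your extra argument that optimal solutions are single-outputting), and NP membership follows from polynomial-time verification of output simulation, which the paper obtains by modifying a prior verification algorithm where you spell out an equivalent product construction. No gap; your elaborations only make explicit what the paper leaves terse.
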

\begin{proof}
Firstly, \spfm is a special case of \mpfm problems. These \mpfm problems are at
least as hard as \spfm. Hence, \mpfm are in NP-hard. On the other hand, a solution
for \mpfm can be verified in polynomial time. (Change the equality check on line~7 of Algorithm~1 in~\cite{o2017concise} to a subset check.) Therefore, \mpfm is NP-Complete.
\end{proof}


Next, we examine related prior work on \spfm closely as a means
to develop new insights for our algorithms, first for \spfm (Section~\ref{sec:spfm}), and then \mpfm
(Section~\ref{sec:mpfm}).

\shortvspace{-12pt}
\section{Related work: Prior filter minimization ideas~(\spfm)}
\shortvspace{-8pt}

Several elements come together in this section and Figure~\ref{fig:method}
attempts to show the inter-relationships graphically.  The original question of
minimizing state in filtering is first alluded to by
LaValle~\cite{lavalle10sensing} as an open problem, who suggested that it is
`similar to Nerode equivalence classes'.  The problem of filter reduction,
i.e., \spfm in our terms, was formalized and shown to differ in complexity class
from the automata problem in \cite{o2017concise}. That paper also proposed a
heuristic algorithm, which served as a starting point for subsequent work.
The heuristic algorithm uses conflict graphs to designate which vertices cannot
be merged (are conflicting). It starts with a conflict relation where two
vertices are in conflict when they have different outputs, then iteratively
refines the conflict relation.  Refinement has two steps:
($i$)\,\emph{introducing edges}: two vertices are determined to be conflicting
or not via a graph coloring subroutine, and edges are added between conflicting
vertices; ($ii$)\,\emph{propagating conflicts upstream}: filter states are
marked as conflicted when they transition to conflicted states under the same
observation.  An example input filter, shown in Figure~\ref{fig:nonrecursive}, 
is reduced by following this procedure, which is depicted step-by-step in
Figures~\ref{fig:conflict_1}--\ref{fig:conflict_2_r}.

A conjecture in \cite{o2017concise} was that this algorithm is guaranteed to
find a minimal filter if the graph coloring subroutine gives a minimal
coloring.  (Put another way: the inexactness in arriving at a minimal filter
can be traced to the graph coloring giving a suboptimal result.) But this
conjecture was later proved to be false by Saberifar et
al.~\cite{saberifar2017combinatorial}.  They show an instance where there exist
multiple distinct optimal solutions to the graph coloring subproblem, only a
strict subset of which lead to the minimal filter. One might naturally ask, and
indeed they do ask, the question of whether some optimal coloring is sufficient
to arrive at the optimal filter.  Following along these lines 
(see \S7.3 in~\cite{saberifar2017combinatorial}),
one might sharpen the original conjecture of \cite{o2017concise} to give the
following statement:

\begin{figure}[ht!]
\centering
\includegraphics[scale=0.5]{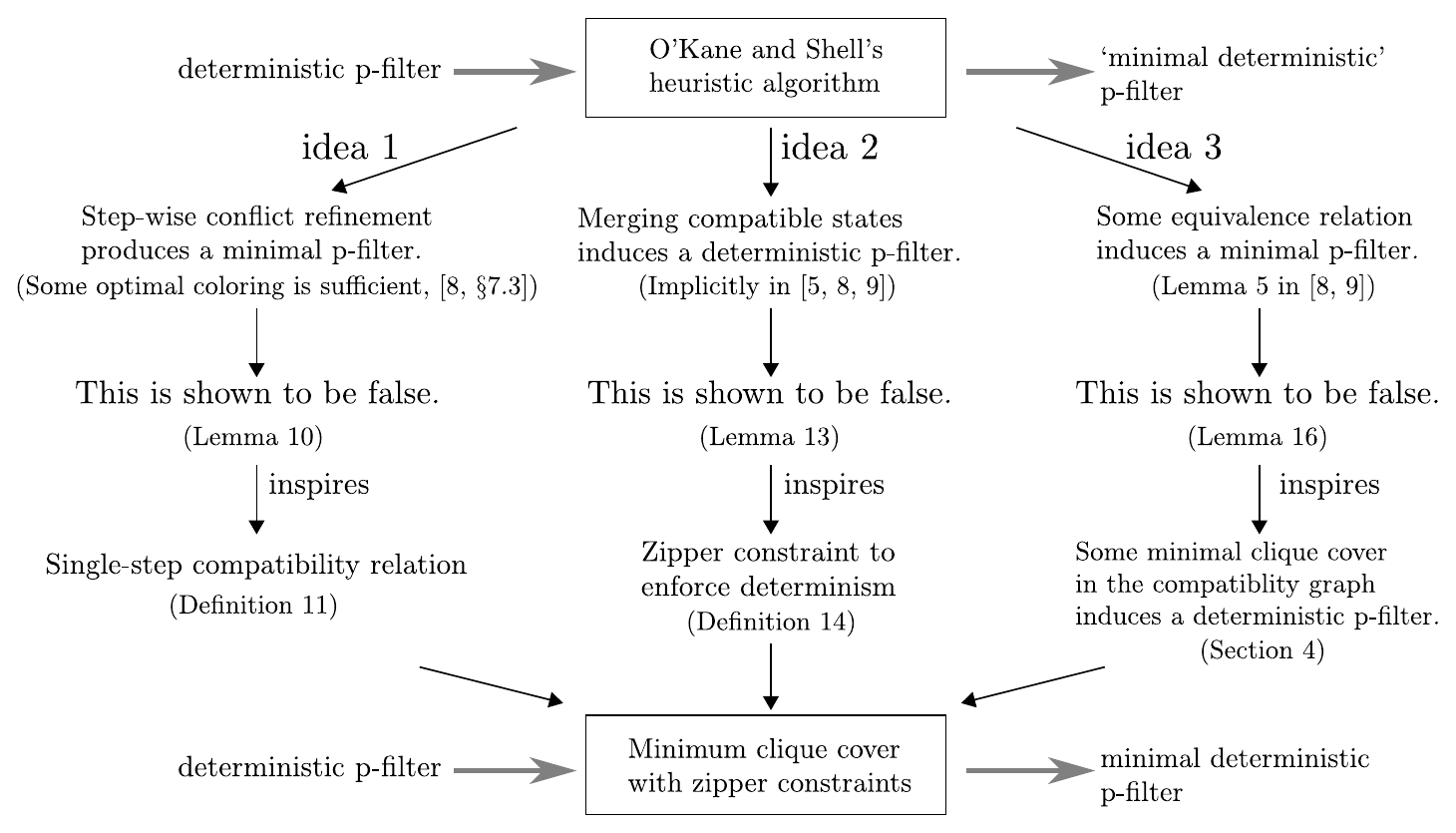}
\caption{
Three distinct insights led to the development of a new algorithm (described
in Section~\ref{sec:spfm}) for \spfm.  This roadmap shows the provenance of those insights
in terms of previous ideas in \spfm, which we examine carefully.
\label{fig:method}} 
\shortvspace{-12pt}
\end{figure}

\begin{idea} 
\label{idea:step-wise}
In the step-wise conflict refinement procedure of O'Kane and Shell's
heuristic algorithm~\cite{o2017concise}, some optimal coloring is
sufficient to guarantee a minimal filter for \spfm.
\end{idea}


\begin{lemma}
\label{lemma:counterexampleiterative}
Idea~\ref{idea:step-wise} is false.
\end{lemma}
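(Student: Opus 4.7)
The plan is to refute Idea~\ref{idea:step-wise} by exhibiting a concrete counterexample: a specific input filter $F$ for \spfm on which the step-wise refinement procedure of~\cite{o2017concise} necessarily returns a filter strictly larger than the true minimum, no matter which optimal coloring is selected at each refinement step. A single such instance suffices.

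First I would design $F$ so that its initial conflict graph (the one whose edges come only from output disagreements) admits an optimal coloring of chromatic number $k^{*}$, but the true minimal output-simulating filter has $k < k^{*}$ states. This is what decouples the step-wise view from the global view: the partition realizing the true minimum can merge states that have different outputs in $F$ (since we only require output simulation, not equality), so it need not correspond to a proper coloring of any intermediate conflict graph considered by the algorithm. I would then design the transition structure so that, after step $(i)$ introduces edges from the optimal coloring, step $(ii)$ propagates conflicts upstream into pairs of states that the true minimum partition would have merged. Once those upstream conflicts are present, no subsequent iteration can remove them, so the algorithm is permanently locked out of the optimal partition.

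The second phase is verification. I would (a) enumerate every optimal coloring of the initial conflict graph, trace each through the refinement loop, and record the final filter size; and (b) independently certify the true minimum $k$ either by brute-force enumeration of partitions of $V(F)$ (feasible for a small instance) or by invoking the exact algorithm developed later in this paper. Showing that every branch exits with more than $k$ states, while the partition realizing $k$ output simulates $F$ per Definition~\ref{def:ext}, closes the argument.

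The main obstacle is making the counterexample robust against the quantifier in Idea~\ref{idea:step-wise}: Saberifar et al.'s construction in~\cite{saberifar2017combinatorial} already shows that \emph{some} optimal colorings fail, but it leaves open the possibility that some other optimal coloring succeeds, which is exactly the loophole Idea~\ref{idea:step-wise} exploits. Ruling out every optimal coloring, rather than just one, is the real work. I expect the cleanest route is to build $F$ with an automorphism that acts transitively on the set of optimal colorings while fixing the pair of states whose shared color-class assignment triggers the fatal upstream propagation; then analyzing a single representative optimal coloring suffices to cover them all, and the counterexample can be kept small enough that the enumeration in the verification phase remains trivial.
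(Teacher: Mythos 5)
Your overall strategy --- exhibit one concrete input filter, run the step-wise procedure down every branch of optimal colorings, and certify the true minimum independently --- is exactly the shape of the paper's proof (Figures~\ref{fig:nonrecursive}--\ref{fig:minimal_nonrecursive}). But the design principle you rely on to make such an instance exist is false, and it would prevent you from ever constructing one. You claim the minimum output-simulating filter ``can merge states that have different outputs in $F$'' and therefore need not be a proper coloring of the initial conflict graph, targeting $k < k^{*}$ where $k^{*}$ is the chromatic number of that graph. For \spfm this is impossible: the empty string is a common extension of every pair of states, so output simulation of a deterministic single-outputting filter forces any two states corresponding to the same minimal-filter state to have \emph{equal} outputs (this is exactly the $s=\epsilon$ case of Definition~\ref{def:compatibility}). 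The initial conflict graph is complete multipartite with one part per output class, its chromatic number is the number of outputs, and the minimal filter must realize every output, so $k \geq k^{*}$ always. The actual mechanism in the paper's counterexample is different and subtler: the trouble arises at a \emph{refined} conflict graph (after length-$1$ disagreements are added, Figure~\ref{fig:conflict_1}), where every \emph{optimal} coloring commits to mergers whose upstream conflict propagation forces extra splits later (ending at $6$ states), while a \emph{suboptimal} coloring of that same graph --- one using more colors at that stage --- avoids the induced conflicts and terminates at $5$ states. In other words, the counterexample does not escape the coloring framework; it shows that local optimality of the coloring subproblem is the wrong objective.

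A secondary gap: Idea~\ref{idea:step-wise} quantifies over the coloring chosen at \emph{every} iteration of the refinement loop, not just the first. Your verification plan enumerates optimal colorings of the initial graph and then traces ``the'' refinement, but each subsequent iteration also invokes a coloring subroutine with its own set of optimal solutions, so the enumeration must branch there as well. The symmetry idea for collapsing the enumeration is reasonable but unnecessary for an instance as small as the paper's.
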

\begin{proof}
This is simply shown with a counterexample. Consider the problem of minimizing the
input filter shown in Figure~\ref{fig:nonrecursive}, the heuristic algorithm
will first initialize the colors of the vertices with their output. Next, it
identifies the vertices that disagree on the outputs of extensions with length $1$
as shown in Figure~\ref{fig:conflict_1}, and then refines the colors of the
vertices as shown in Figure~\ref{fig:conflict_1_r} following a minimal graph
coloring solution on the conflict graph. Then it further identifies the
conflicts on extensions with length $2$, via the conflict graph shown in
Figure~\ref{fig:conflict_2}, and the vertex colors are further refined as
shown in Figure~\ref{fig:conflict_2_r}.  Now, no further conflicts can be
found. A filter, with $6$ states, is then obtained by merging the states with
the same color.  However, there exists a minimal filter, with $5$ states, shown
in Figure~\ref{fig:minimal_nonrecursive}, that can be found by choosing 
coloring solution for the conflict graph shown in Figure~\ref{fig:conflict_1}. 
That coloring is suboptimal.
\end{proof}

\begin{figure}[b!]
\begin{subfigure}[t]{0.55\textwidth}
\centering
\includegraphics[scale=0.5]{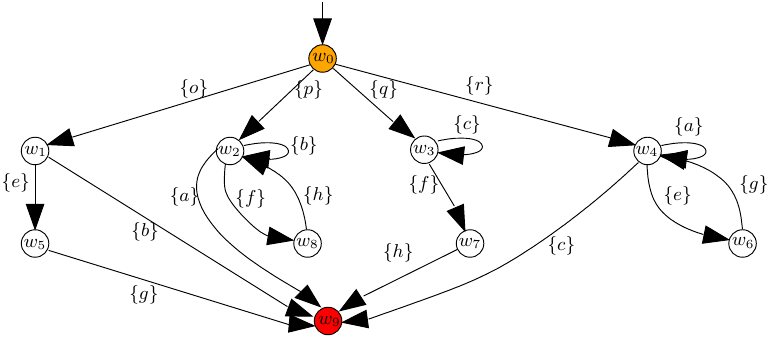}
\caption{Example input p-filter.\label{fig:nonrecursive}}
\end{subfigure}
\hfill
\begin{subfigure}[t]{0.4\textwidth}
\shortvspace{-1.8cm}
\raggedright
\includegraphics[scale=0.7]{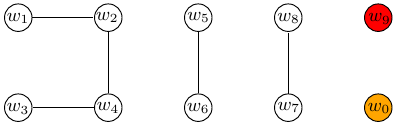}
\caption{Graphs of initial conflicts for all vertices with the same
output.\label{fig:conflict_1}}
\end{subfigure}
\\
\begin{subfigure}[t]{0.55\textwidth}
\centering
\includegraphics[scale=0.5]{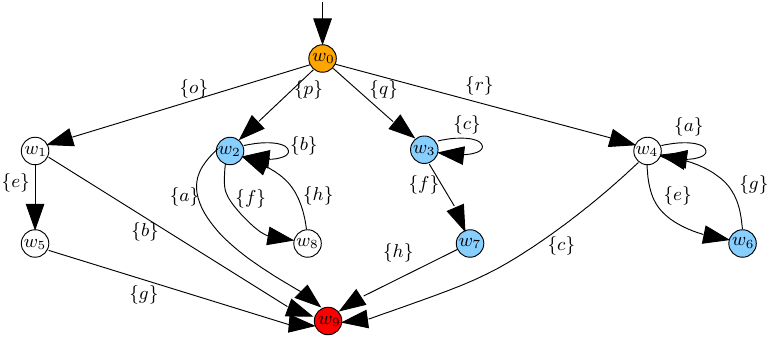}
\caption{The first refinement of the filter following an optimal coloring of the
conflict graphs.\label{fig:conflict_1_r}}
\end{subfigure}
\hfill
\begin{subfigure}[t]{0.4\textwidth}
\raggedright
\includegraphics[scale=0.7]{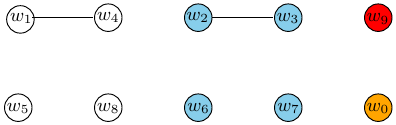}
\caption{Reduced conflict graphs.\label{fig:conflict_2}}
\end{subfigure}
\\
\begin{subfigure}[t]{0.48\textwidth}
\raggedleft
\includegraphics[scale=0.45]{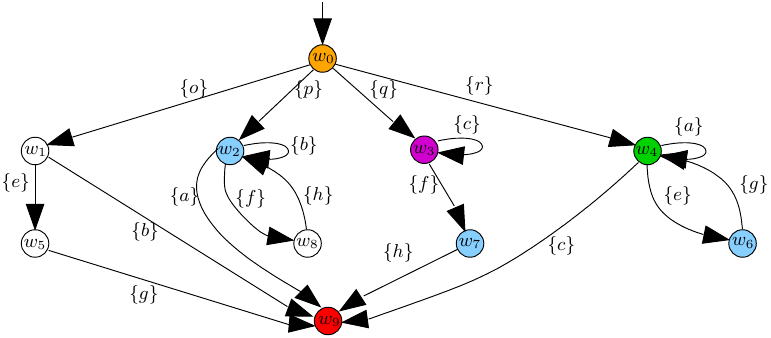}
\caption{Second refinement of filter following an optimal coloring of
the conflict graphs.\label{fig:conflict_2_r}}
\end{subfigure}
\hfill
\begin{subfigure}[t]{0.48\textwidth}
\raggedright
\includegraphics[scale=0.45]{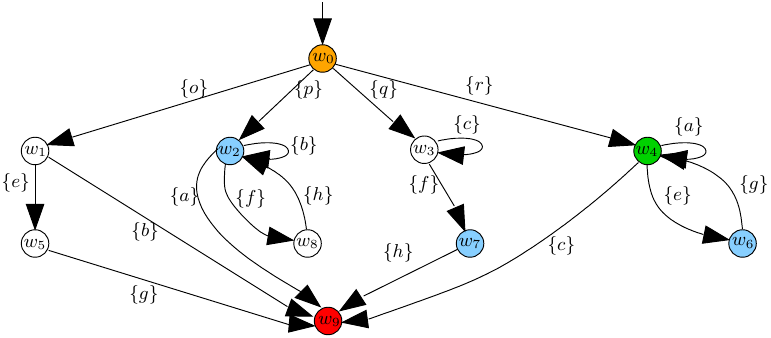}
\caption{The coloring that gives the minimal
filter.\label{fig:minimal_nonrecursive}}
\end{subfigure}
\caption{An example run of the heuristic minimization algorithm in
\cite{o2017concise} (a)--(e).
This particular input also 
shows that optimal step-wise conflict refinement may fail to yield a minimal filter (Lemma~\ref{lemma:counterexampleiterative}).}
\shortvspace{-12pt}
\end{figure}

This appears to indicate a sort of \emph{local optimum} arising via
sub-problems associated with incremental (or stepwise) reduction. Since optimal
colorings for individual steps are seen to be insufficient to guarantee a
minimal filter, to find a minimal filter, we would have to enumerate all
colorings (suboptimal or otherwise) at each iteration. That is, however,
essentially a brute force algorithm.  A more informed approach is to
compute implications of  conflicts more \emph{globally}, in a way that doesn't
depend on earlier merger decisions.  In our algorithm, rather than tracking
vertices which are in conflict, we introduce a new notion of compatibility
between vertices that may be merged.  This notion differs from the one
recursively defined in \cite{rahmani2018relationship}, as our compatibility
relation is computed in one fell swoop, before making any decisions to reduce
the filter: 

\begin{definition}[compatibility]
\label{def:compatibility}
Let $F$ be a deterministic p-filter. We say a pair of vertices $v,w\in V(F)$ are
\defemp{compatible}, denoted $v\compatible w$, if they agree on the outputs
of all their extensions, i.e., $\forall s\in
\extensions{F}{v}\cap\extensions{F}{w}, \forall v'\in \reachedvf{F}{v}{s},
\forall w'\in \reachedvf{F}{w}{s}, c(v')=c(w')$. 
A \emph{mutually compatible} set consists of vertices where all pairs
are compatible.
\end{definition}
Via this notion of compatibility, we get an undirected compatibility graph:
\begin{definition}[compatibility graph]
\label{def:compatibilitygraph}
Given a deterministic filter $F$, its compatibility graph
$\compatibilitygraph{F}$ is an unlabeled undirected graph constructed by
creating a vertex associated with each state in $F$, and building an edge
between the pair of vertices associated with two compatible states.
\end{definition}
This compatibility graph can be constructed in polynomial time. 
As every filter state and associated compatibility graph state are
one-to-one, to simplify notation we'll use the same symbol for both
and context to resolve any ambiguity.



The second idea relates to the type of the output one obtains after merging
states that are compatible or not in conflict.  Importantly, the filter
minimization problem \spfm requires one to give a minimal filter which
is deterministic.

\begin{idea}
\label{idea:merge}
By merging the states that are compatible, the heuristic algorithm always produces a deterministic p-filter.
\end{idea}

The definition of the reduction problems
within~\cite{o2017concise,saberifar2017combinatorial,rahmani2018relationship}
are specified so as to require that the output obtained be deterministic. 
But this postcondition is never shown or formally established.
In fact, it does not always hold.

\begin{lemma}
\label{lem:merge}
Idea~\ref{idea:merge} is false.
\end{lemma}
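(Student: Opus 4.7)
The plan is to exhibit an explicit counterexample: a deterministic input p-filter $F$ on which, after the heuristic algorithm merges all compatible (equivalently, non-conflicting) vertices, the resulting transition function violates the determinism condition $\tau(v_1,v_2)\cap\tau(v_1,v_3)=\emptyset$ of the definition. One refutation via a single small instance suffices.

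The structural feature I would engineer is non-transitivity of the compatibility relation. Concretely, I would build $F$ so that there are three states $u,v,w$ with $u\compatible v$ and $u\compatible w$ but $v\not\compatible w$, together with two predecessor states $p,q$ such that $p\compatible q$, and $\tau(p,v)\ni y$ and $\tau(q,w)\ni y$ for some common observation $y$. The heuristic's merge step, acting on the conflict graph (where only $v$ and $w$ are adjacent), may place $p$ and $q$ in the same color class while $v$ and $w$ must go into different color classes. Collapsing each color class into a single state then produces a merged vertex (from $\{p,q\}$) with two distinct $y$-successors — the merged $v$-block and the merged $w$-block — which is exactly a violation of determinism.

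The concrete construction I would attempt is a filter whose outputs and transitions are tuned so that the pairwise compatibilities above can be verified by inspecting common extensions, using distinct terminal outputs downstream of $v$ and $w$ to force $v\not\compatible w$, while leaving $u$ output-silent (or reaching a neutral sink) on the extensions that separate $v$ and $w$, so that $u$ remains compatible with both. The required predecessor structure $p\xrightarrow{y}v$, $q\xrightarrow{y}w$ is added in such a way that $p$ and $q$ still agree on all other common extensions (the observations leaving $p$ and $q$ can be chosen largely disjoint so that $\extensions{F}{p}\cap\extensions{F}{q}$ is manageable). Once all compatibilities are verified, one exhibits the offending merging and reads off the non-deterministic pair of $y$-transitions from the merged $\{p,q\}$ state.

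The main obstacle is bookkeeping rather than conceptual: making the example small enough to display but still rich enough to (i) be a legitimate deterministic p-filter, (ii) realize the non-transitive compatibility pattern on $u,v,w$, and (iii) admit a merging consistent with the heuristic's greedy/graph-coloring choices that actually collapses $p$ with $q$. Once a single such instance is presented and these three items are verified, the lemma follows immediately, since the output of the heuristic fails the determinism predicate and hence Idea~\ref{idea:merge} is false.
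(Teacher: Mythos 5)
Your proposal is correct and takes essentially the same approach as the paper: the paper's counterexample (Figure~\ref{fig:input_nd}) instantiates exactly the pattern you describe, with the non-transitive compatibility triple $w_5\compatible w_6$, $w_6\compatible w_7$, $w_5\not\compatible w_7$ and compatible predecessors feeding the incompatible pair under a common observation, so that merging yields a non-deterministic result. The only difference is that you leave the concrete instance as a construction plan (which suffices to refute Idea~\ref{idea:merge} once written down), whereas the paper exhibits the filter explicitly and additionally notes that the non-deterministic output also fails to output simulate the input.
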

\begin{proof}
We show that the existing algorithm may produce a non-deterministic filter,
which does not output simulate the input filter, and is thus not a valid
solution. Consider the filter shown in Figure~\ref{fig:input_nd} as an input. The
vertices with the same color are compatible with each other, with the following exception for $w_5$, $w_6$
and $w_7$. Vertex $w_5$ is compatible with $w_6$, vertex $w_6$ is compatible with $w_7$, but
$w_5$ is not compatible with $w_7$. The minimal filter found by the existing
algorithm is shown in Figure~\ref{fig:false_nd}. 
The string $aac$ suffices to shows the non-determinism, reaching both
orange and cyan vertices. It fails to output simulate the input because
cyan should never be produced.
\end{proof}

If determinism can't be taken for granted, we might constrain the output to
ensure the result will be a deterministic filter.  To do this, we introduce a
\auxtext when merging compatible states:

\begin{definition}[\auxtext]\label{defn:aux_const}
In the compatibility graph $G=\compatibilitygraph{F}$ of filter $F$, if there
exists a set of mutually compatible states $U=\{u_1, u_2,\dots, u_n\}$, then
they can only be selected to be merged if they always transition to a set of
states that are also selected to be merged. For any sets of mutually compatible
states $U, W\subseteq V(G)$ and some observation $y$, we create a
\defemp{\auxtext} expressed as a pair $(U,W)_y$ if $W=\left\{w\in V(G)\mid
y\in\tau(u,w) \text{ for some } u \in U\right\}$. We denote the set of all
\auxestext on compatibility graph $G=\compatibilitygraph{F}$ by 
$\auxes{F}$.  
\end{definition}
The \auxestext for the input filter shown in Figure~\ref{fig:input_nd} consist
of $(\{w_1, w_2\}, \{w_5, w_6\})_a$ and $(\{w_3, w_4\}, \{w_6, w_7\})_b$.
Constraint $(\{w_1, w_2\}, \{w_5, w_6\})_a$ is interpreted as: if $w_1$ and
$w_2$ are selected for merger, then $w_5$ and $w_6$ (reached under $a$) should
also be merged. We call it a \auxtext owing to the resemblance to a zipper
fastener: merger of two earlier states, merges (i.e., pulls together) later
states. In the worst case, the number of \auxestext can be exponential in the
size of the input filter.

\begin{figure}[ht!]
\begin{subfigure}[b]{0.45\textwidth}
\includegraphics[scale=0.7]{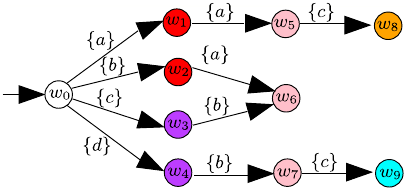}
\caption{An input filter.\label{fig:input_nd}}
\end{subfigure}
\hspace{0.2cm}
\begin{subfigure}[b]{0.45\textwidth}
\includegraphics[scale=0.7]{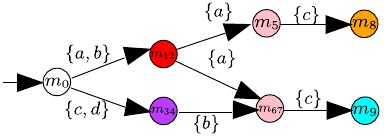}
\caption{The non-deterministic minimal filter found by the existing
algorithm.\label{fig:false_nd}} 
\end{subfigure}
\shortvspace{-8pt}
\caption{A counterexample showing how compatible merges may introduce
non-determinism (Lemma~\ref{lem:merge}). The input filter also illustrates a
violation of the presumption that an equivalence relation can yield a minimum
filter (Lemma~\ref{lem:eqr}).\label{fig:nondeterminism}} 
\shortvspace{-17pt}
\end{figure}

A third idea is used by O'Kane and Shell's heuristic algorithm and is also
stated, rather more explicitly, by Saberifar et al.~(see Lemma~5
in~\cite{saberifar2017combinatorial} and Lemma~$5$
in~\cite{rahmani2018relationship}). It indicates that we can obtain a minimal
filter via merging operations on the compatible states, which
yields a special class of filter minimization problems. 
For this class, recent work has expoited integer linear programming
techniques to compute exact
and feasible solutions efficiently~\cite{rahmani2020integer}.

\begin{idea}
\label{idea:equiv}
Some equivalence relation induces a minimal filter in \spfm.
\end{idea}

Before examining this, we rigorously define the notion of an induced relation:
\shortvspace{-14pt}
\begin{definition}[induced relation]
Given a filter $F$ and another filter $F'$, if $F'$ output simulates $F$,
then $F'$ induces a relation $R\subseteq V(F)\times V(F)$, where $(v, w)\in R$
if and only if there exists a vertex $v'\in V(F')$ such that
$\reaching{F}{v}\cap \reaching{F'}{v'}\neq\emptyset$ and $\reaching{F}{w}\cap
\reaching{F'}{v'}\neq\emptyset$. We also say that $v$ and $w$ corresponds to
state $v'$.
\end{definition}

\begin{lemma}\label{lem:eqr}
Idea~\ref{idea:equiv} is false.
\end{lemma}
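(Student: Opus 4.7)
The plan is to reuse the input filter $F$ of Figure~\ref{fig:input_nd} (already introduced for Lemma~\ref{lem:merge}) and to show that no equivalence relation on $V(F)$ can induce a minimum-sized output-simulating deterministic filter. The starting point is the failure of transitivity in the compatibility relation already noted there: $w_5 \compatible w_6$ and $w_6 \compatible w_7$, yet $w_5 \not\compatible w_7$. Any equivalence relation $R$ whose induced filter output-simulates $F$ must be a sub-relation of $\compatible$, since putting incompatible states into a common class would force disagreement on the outputs of some common extension. Consequently the triple $\{w_5, w_6, w_7\}$ is split into at least two $R$-classes.

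I would then propagate this split backward through the \auxestext $(\{w_1, w_2\}, \{w_5, w_6\})_a$ and $(\{w_3, w_4\}, \{w_6, w_7\})_b$: choosing which of $w_5 \sim_R w_6$ or $w_6 \sim_R w_7$ to retain (one cannot keep both) dictates, via determinism of the quotient, how $\{w_1, w_2\}$ and $\{w_3, w_4\}$ are allowed to collapse. A short case analysis on the two symmetric choices yields a lower bound $N$ on the number of classes of any valid $R$. Next, I would exhibit an output-simulating deterministic filter $F^{\dagger}$, built from a \emph{cover} of $V(F)$ in which $w_6$ sits in two distinct merged states (one pooled with $w_5$, one with $w_7$), that has strictly fewer than $N$ states. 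Determinism and output simulation of $F^{\dagger}$ are checked by direct inspection on this small instance.

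Finally, I would argue that the relation induced by $F^{\dagger}$ on $V(F)$ pairs $w_6$ with both $w_5$ and $w_7$ without pairing $w_5$ with $w_7$ (because the sequences reaching $w_6$ in $F$ split across two distinct states of $F^{\dagger}$), hence it fails transitivity and is not an equivalence relation. Since $F^{\dagger}$ beats every partition-induced filter in size, no equivalence relation attains the minimum on $F$, refuting Idea~\ref{idea:equiv}.

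The main obstacle is the backward-propagation step: one must rule out every $R \subseteq \compatible$, not only the obvious splits of $\{w_5, w_6, w_7\}$. I expect the \auxtext-driven propagation to keep this tractable on the small instance, since the constraints tightly couple the $w_i$ into a handful of configurations; but it is the only part of the argument that does not reduce to direct inspection of Figure~\ref{fig:input_nd}.
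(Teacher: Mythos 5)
Your proposal is correct and rests on the same counterexample and the same witness as the paper: the input filter of Figure~\ref{fig:input_nd}, minimized by splitting $w_6$ and pooling one copy with $w_5$ and the other with $w_7$, so that the induced relation relates $w_6$ to both $w_5$ and $w_7$ without relating $w_5$ to $w_7$ and hence fails transitivity. The only difference is that the paper simply exhibits Figure~\ref{fig:minimal_nd} as the minimizer and notes the non-transitivity, whereas your extra step --- lower-bounding the size of every filter induced by an equivalence relation $R\subseteq\compatible$ by propagating the forced split of $\{w_5,w_6,w_7\}$ back through the \auxestext --- makes explicit what the paper leaves to inspection, namely that no equivalence relation attains the minimum, which is strictly what refuting Idea~\ref{idea:equiv} requires; you would of course need to actually carry out that short case analysis on this instance rather than leave $N$ symbolic.
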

\begin{proof}
It is enough to scrutinize the previous counterexample closely.
The minimization problem \spfm for the input filter shown in Figure~\ref{fig:input_nd}, is
shown in Figure~\ref{fig:minimal_nd}. It is obtained by ($i$) splitting vertex
$w_6$ into an upper part reached by $a$ and a lower part reached by $b$, ($ii$)
merging the upper part of $w_6$ with $w_5$, the lower part of $w_6$ with $w_7$, and
other vertices with those of the same color. This does not induce an equivalence relation,
since $w_6$ corresponds to two different vertices in the minimal filter. 
\end{proof}

\begin{figure}[ht!]
\shortvspace{-22pt}
\centering
\begin{subfigure}[b]{0.45\textwidth}
\includegraphics[scale=0.8]{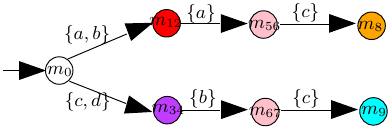}
\caption{A minimal filter for Figure~\ref{fig:input_nd}.\label{fig:minimal_nd}}
\end{subfigure}
\hfill
\begin{subfigure}[b]{0.45\textwidth}
\includegraphics[scale=0.7]{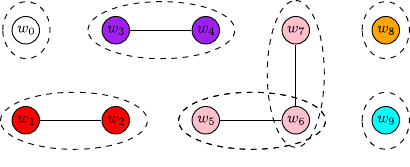}
\caption{Cliques from the minimal filter.\label{fig:compatibility_split}} 
\end{subfigure}
\caption{A minimal filter for Figure~\ref{fig:input_nd} and its induced
cliques.}
\shortvspace{-22pt}
\end{figure}

In light of this, for some filter minimization problems, there may be no quotient operation
that produces a minimal filter and an exact algorithm for minimizing filters
requires that we look beyond equivalence relations.  

Some strings that reach a single state in an input filter may reach multiple
states in a minimal p-filter (e.g., $ba$ and $cb$ on
Figure~\ref{fig:input_nd} and~\ref{fig:minimal_nd}).  On the other hand,
strings that reach different states in the input p-filter may reach the same
state in the minimal filter (e.g., $a$ and $b$ on those same filters).  We say
that a state from the input filter corresponds to a state in the minimal filter
if there exists some string reaching both of them and, hence, this
correspondence is many-to-many.
An important observation is this: for each state $s$ in
some hypothetical minimal filter, suppose we collect all those states in the
input filter that correspond with~$s$.  When we examine the associated states in
the compatibility graph for that collection, they must all form a clique.  Were
it not so, the minimal filter could have more than one output associated for
some strings owing to non-determinism.  But this causes it to fail to output
simulate the input p-filter. 

After firming up and developing these intuitions, the next section 
introduces the concept of a clique cover which enables representation of a search
space that includes relations more general than equivalence
relations.  Based on this new representation, we propose a graph problem
use of \auxestext, and prove it to be equivalent to 
filter minimization.


\shortvspace{-12pt}
\section{A new graph problem that is equivalent to \spfm}
\label{sec:spfm}
\shortvspace{-6pt}

By building the correspondence between the input p-filter in
Figure~\ref{fig:input_nd} and the minimal result in
Figure~\ref{fig:minimal_nd}, one obtains the set of cliques in the
compatibility graph shown visually in Figure~\ref{fig:compatibility_split}.
Like previous approaches that make state merges by analyzing the compatibility
graph, we interpret each clique as a set of states to be merged into one state
in the minimal filter.  The clique containing $w_3$ and $w_4$ in
Figure~\ref{fig:compatibility_split} gives rise to $m_{34}$ in the minimal
filter in Figure~\ref{fig:minimal_nd} (and $w_1$ and $w_2$ yields $m_{12}$, and
so on).  
However, states may further be shared across multiple cliques. We observe that
$w_6$ was merged with $w_5$ in the minimal filter to give $m_{56}$, and $w_6$
also merged with $w_7$ to give $m_{67}$. The former has an incoming edge
labeled with an~$a$, while the latter has an incoming edge labeled~$b$.  The
vertex $w_6$, being shared by multiple cliques, is split into different copies
and each copy merged separately.

Generalizing this observation, we turn to searching for the smallest set of
cliques that cover all vertices in the compatibility graph. Further, to
guarantee that the set of cliques induces a deterministic filter, we must
ensure they respect the \auxestext.  It will turn out that a
solution of this new constrained minimum clique cover problem always induces a
minimal filter for \spfm, and a minimal filter for \spfm always
induces a solution for this new problem.  The final step is to reduce any \mcca
problem to a SAT instance, and leverage SAT solvers to find a minimal filter
for \spfm.

\shortvspace{-8pt}
\subsection{A new minimum clique cover problem}
\shortvspace{-4pt}

To begin, we extend the preceding argument from the compatibility clique
associated to single state $s$, over to all the states in the minimal filter.
This leads one to observe that the collection of all cliques for each state
in the minimal p-filter forms a clique cover:
\begin{definition}[induced clique cover]
Given a p-filter $F$ and another p-filter $F'$, we say that a vertex $v$ in $F$
\defemp{corresponds to} a vertex $v'_i$ in $F'$ if \mbox{$\reaching{F}{v}\cap
\reaching{F'}{v'_i}\neq\emptyset$}. 
Then, denoting the subset of vertices of $F$ corresponding to $v'_i$ in $F'$
with $K_{v'_i}=\{v\in V(F)\,|\, v \text{ corresponds to } v'_i\}$, we form the
collection of all such sets, $\cover{Q}(F, F')=\{K_{v'_1}, K_{v'_2}, \dots, K_{v'_n}\}$,
for $i\in\{1,\dots, n\}$ where $n=|V(F')|$. 
When $F'$ output simulates $F$, then the $K_{v'_i}$ form cliques in the compatibility
graph~$\compatibilitygraph{F}$. Further, when this collection of sets $\cover{Q}(F,
F')$ covers all vertices in $F$, i.e., $\cup_{K_i\in \cover{Q}(F, F')}=V(F)$, we say
that $\cover{Q}(F, F')$ is an induced clique cover. 

\end{definition}

It is worth repeating: the size of filter $F'$ (in terms of number of
vertices) and the size of the induced clique cover (number of sets) are equal.

Without loss of generality, here and henceforth we only consider the p-filter
with all vertices reachable from the initial state, since the ones that can
never be reached will be deleted during filter minimization anyway.

Each clique of the clique cover represents the states that can be potentially
merged. But the \auxtext, to enforce determinism, requires that the
set of vertices to be merged should always transition under the same observation
to the ones that can also be merged. Hence, the \auxestext (of
Definition~\ref{defn:aux_const}) can be evaluated across whole covers:
\begin{definition}
A clique cover $\cover{K}=\{K_0, K_1, \dots, K_m\}$ satisfies the set of
\auxestext $\Aux=\{(U_1, W_1)_{y_1}, (U_2, W_2)_{y_2},\dots\}$, when
for every \auxtext $(U_i, W_i)_{y_i}$, if there exist a clique
$K_s\in \cover{K}$, such that $U_i\subseteq K_s$, then there exists another
clique $K_t\in \cover{K}$ such that $W_i\subseteq K_t$.  
\end{definition}


Now, we have our new graph problem, \mcca.
\ourproblem{Minimum clique cover with \auxestext (\mcca)}
{A compatibility graph $G$, a set of \auxestext \Aux.}
{A minimal cardinality clique cover of $G$ satisfying \Aux.}

\shortvspace{-8pt}
\subsection{From minimal clique covers to filters}
\shortvspace{-4pt}
Given a minimal cover that solves \mcca, we construct a
filter by merging the states in the same clique and choosing 
edges between these cliques appropriately:
\shortvspace{-12pt}
\begin{definition}[induced filter]
Given a clique cover $\cover{K}$ on the compatibility graph of deterministic p-filter
$F$, if $\cover{K}$ satisfies all the \auxestext in $\auxes{F}$, then it
\defemp{induces a filter} $F'=M(F, \cover{K})$ by treating cliques as vertices:
\begin{enumerate}
\item Create a new filter $F'=(V', V'_0, Y, \tau', C, c')$ with $|\cover{K}|$
vertices, where each vertex $v'$ is associated with a clique $K_{v'}$ in $\cover{K}$;
\item Add each vertex $v'$ in $F'$ to $V'_0$ iff the associated clique
contains an initial state in $F$;
\item The output of every $v'$ in $F'$, with associated clique $K_{v'}$,
is the set of common outputs for all states in $K_{v'}$, i.e., $c'(v')=\cap_{v\in K_{v'}} c(v)$. 
\item For any pair of $v'$ and $w'$ in $F'$, inherit all transitions
between states in the cliques of $v'$ and $w'$, i.e., $\tau(v', w')=\cup_{v\in
K_{v'}, w\in K_{w'}} \tau(v,w)$.  
\item For each vertex $v'$ in $F'$ with multiple outgoing edges labeled
$y$, keep only the single edge to the vertex $w'$, such that all vertices $K_{v'}$ transition
to under $y$ are included in $K_{w'}$.
This edge must exist since $\cover{K}$ satisfies all $\auxes{F}$.\label{step:last}
\end{enumerate}
\end{definition}

The size of the cover (in terms of number of sets) and size of the induced
filter (number of vertices) are equal.
 
Notice that the earlier intuition is mirrored by this formal construction:
states belonging to the same clique are 
merged when constructing the induced filter; states in multiple
cliques are split when we make the edge choice in step~\ref{step:last}.
Next, we establish that the induced filter indeed supplies the goods: 

\begin{restatable}{lemma}{deterministicouputsimulating}
Given any clique cover $\cover{K}$ on the compatibility graph $\compatibilitygraph{F}$ of a deterministic
p-filter $F$, if $\cover{K}$ satisfies the \auxestext $\auxes{F}$ and covers
all vertices of $\compatibilitygraph{F}$, then the induced filter
$F'=M(F,\cover{K})$ is deterministic and output simulates $F$.
\label{lm:clique_aux_deterministic_output_simu}
\end{restatable}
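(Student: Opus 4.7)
The plan is to verify the two claims—determinism of $F'$ and that $F'$ output simulates $F$—together, after first recording a useful fact about the output function. Taking $s = \epsilon$ in the compatibility definition shows that $v \compatible w$ forces $c(v) = c(w)$; consequently, every clique $K_{v'} \in \cover{K}$ is a set of states sharing a single common output, so the formula $c'(v') = \bigcap_{v \in K_{v'}} c(v)$ collapses to that common value and is in particular a non-empty subset of $c(v)$ for every $v \in K_{v'}$.

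For determinism, the two conditions to check are $|V_0'| = 1$ and uniqueness of outgoing edges under each observation. The first follows by reading step 2 of the construction as singling out one clique containing the single initial state $v_0 \in V_0(F)$. For the second, step 5 explicitly retains one outgoing $y$-edge from each vertex $v'$, so determinism reduces to showing that this step is well defined: whenever some state of $K_{v'}$ has a $y$-transition, there exists a clique $K_{w'} \in \cover{K}$ containing the entire set $W$ of $y$-successors of states in $K_{v'}$. The key sublemma is that in a deterministic filter compatibility is preserved under a common observation: if $v \compatible w$ and $y \in \tau(v, v_1) \cap \tau(w, w_1)$, then for every $s \in \extensions{F}{v_1} \cap \extensions{F}{w_1}$ the string $ys$ lies in $\extensions{F}{v} \cap \extensions{F}{w}$, forcing $v_1 \compatible w_1$. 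Hence $W$ is mutually compatible, $(K_{v'}, W)_y$ is a \auxtext in $\auxes{F}$, and since $\cover{K}$ satisfies $\auxes{F}$ the required $K_{w'}$ exists.

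Output simulation then follows by induction on $|s|$ under the strengthened invariant: for every $s \in \Language{F}$, $\reachedv{F'}{s}$ is a singleton $\{v'\}$ and $\reachedv{F}{s} \subseteq K_{v'}$. The base case $s = \epsilon$ is the initial-state setup. For the inductive step on $s = s'y$, determinism together with the zipper argument from the previous paragraph identifies the unique $y$-successor $w'$ whose associated clique $K_{w'}$ contains every $y$-successor of $K_{v'}$, and in particular all of $\reachedv{F}{s}$. The output claim then falls out: $\reachedc{F'}{s} = c'(w') = c(v)$ for any $v \in \reachedv{F}{s} \subseteq K_{w'}$, which is non-empty and contained in $\reachedc{F}{s} = \bigcup_{u \in \reachedv{F}{s}} c(u)$. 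The main obstacle is showing that step 5 defines a well-posed target clique; once that is secured by the zipper constraints together with preservation of compatibility under successors, the rest of the proof is routine bookkeeping driven by the induction.
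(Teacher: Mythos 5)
Your proof is correct and follows the same route as the paper's: determinism comes from the \auxestext making step~5 of the induced-filter construction well-posed, and output simulation then follows because each clique shares a single output. In fact your write-up is more complete than the paper's own two-line argument --- the sublemma that compatibility is preserved under a shared observation (so that the $y$-successor set $W$ of a clique $K_{v'}$ is itself mutually compatible and hence $(K_{v'},W)_y$ genuinely belongs to $\auxes{F}$) is precisely the step the paper leaves implicit, and your induction on $|s|$ with the invariant $\reachedv{F}{s}\subseteq K_{v'}$ makes the output-simulation claim precise. The one place you silently deviate is $|V_0'|=1$: as literally written, step~2 adds \emph{every} clique containing the initial state to $V_0'$, so your ``single out one such clique'' is a charitable repair of the construction rather than a consequence of it --- worth flagging, but the fault lies with the paper's definition, not your argument.
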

\ifdefined\arxiv
	\begin{proof}
	For any string $s\in \Language{F}$, let the vertex reached by string $s$ in $F$
	be $v$. Then $v$ must belong at least one clique in $\cover{K}$, where all vertices in
	this clique can be viewed as merged into a new vertex in $F'$. Hence, $s$ should
	reach at least one vertex in $F'$ and this vertex yield the same output
	$\reachedc{F}{s}$.
	Since $\cover{K}$ satisfies the \auxestext $\auxes{F}$, the induced filter
	$F'$ must be deterministic since there is no vertex that has any
	non-deterministic outgoing edges bearing the same label. 
	Because $F'$ is deterministic, $\forall s\in\Language{F}$, $s$ reaches a single
	vertex in $F'$. In addition, this vertex in $F'$ shares the same output
	$\reachedc{F}{s}$. Therefore, $F'$ also output simulates $F$.
	\end{proof}
\else
\shortvspace{-4pt}
	\proofinextension
\shortvspace{4pt}
\fi

A surprising aspect of the preceding is how the \auxestext\,---which are imposed to ensure that a deterministic filter is
produced---\,enforce output-simulating behavior, albeit indirectly, too. One might have
expected that this separate property would demand a second type of constraint,
but this is not so.


On the other hand, needing to satisfy the \auxestext of the
input filter does not entail the imposition of any gratuitous requirements:

\shortvspace{-4pt}
\begin{restatable}{lemma}{cliqueaux}
Given any deterministic p-filters $F$ and $F'$, if $F'$ output simulates $F$,
then the induced clique cover $\cover{Q}(F,F')$ on the compatibility graph of $F$
satisfies all \auxestext in $\auxes{F}$.
\label{lm:clique_aux}
\end{restatable}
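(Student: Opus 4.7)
The plan is to fix an arbitrary zipper constraint $(U,W)_y \in \auxes{F}$ together with an arbitrary clique $K_{v'_i} \in \cover{Q}(F,F')$ such that $U \subseteq K_{v'_i}$, and to exhibit a single clique $K_{v'_j} \in \cover{Q}(F,F')$ whose associated $F'$-vertex $v'_j$ is the $y$-successor of $v'_i$, and then show $W \subseteq K_{v'_j}$.

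First I would unpack the hypothesis: for every $u \in U$, since $u$ corresponds to $v'_i$, there is some string $s_u$ with $s_u \in \reaching{F}{u} \cap \reaching{F'}{v'_i}$. Next, I would invoke the definition of the zipper constraint: for every $w \in W$, there exists some $u \in U$ with $y \in \tau(u, w)$, so $s_u y$ reaches $w$ in $F$. Because $F$ is deterministic, $w$ is the unique $F$-state reached by $s_u y$. Moreover $s_u y \in \Language{F}$, so since $F'$ output simulates $F$, we have $\reachedc{F'}{s_u y} \neq \emptyset$ and therefore $s_u y$ reaches at least one state in $F'$; determinism of $F'$ then forces $s_u y$ to reach exactly one state, which I will call $v'_j$.

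The key observation to make next is that $v'_j$ does not depend on the choice of $u \in U$: since every $s_u$ reaches $v'_i$ in the deterministic $F'$, the extended string $s_u y$ must reach the unique $y$-successor of $v'_i$ in $F'$, which is a property of $v'_i$ alone. Hence for every $w \in W$, picking the witnessing $u \in U$, the string $s_u y$ reaches $w$ in $F$ and $v'_j$ in $F'$, so $s_u y \in \reaching{F}{w} \cap \reaching{F'}{v'_j}$, i.e.\ $w \in K_{v'_j}$. This gives $W \subseteq K_{v'_j}$, as required.

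The main obstacle, and the step that deserves care, is establishing that the target $F'$-vertex $v'_j$ is well-defined and is the \emph{same} for every $w \in W$; this is precisely where the combination of output simulation (guaranteeing $s_u y$ does not crash in $F'$) and determinism of $F'$ (guaranteeing a unique successor of $v'_i$ under $y$) must be used together. Once this independence is pinned down, the rest is a direct verification against the definitions of induced clique cover and zipper-constraint satisfaction.
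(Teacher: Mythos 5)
Your proof is correct and follows essentially the same route as the paper's: both identify the unique $y$-successor of the $F'$-vertex associated with the clique containing $U$ (using output simulation to rule out crashing and determinism of $F'$ to pin down uniqueness), and then show every $w\in W$ corresponds to that successor. The paper merely phrases this as a proof by contradiction, which is a cosmetic difference.
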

\ifdefined\arxiv
	\begin{proof}
	Suppose that $\cover{Q}(F, F')$ does not satisfy all \auxestext in
	$\auxes{F}$. Specifically, let $(U, V)_y\in \auxes{F}$ be the \auxtext that is
	violated, where each vertex in $V$ transitions from some vertex in $U$ under
	observation $y$. Then there exists a clique $K_s\in
	\cover{Q}(F, F')$, such that $U\subseteq K_s$, but there is no clique
	$K_j\in \cover{Q}(F,F')$ that $V\subseteq K_j$.  According to the construction
	of the induced cover, there exists a vertex $v'_s\in F'$, such that $K_s$
	corresponds to $v'_s$.  For any vertex $u_1\in K_s$, let $s_1\in
	\reaching{F}{u_1}\cap \reaching{F'}{v'_s}$. Then $s_1y$ is also a string in both
	$F$ and $F'$ since $u_1$ transitions to some vertex $v_1$ in $V$ under
	observation $y$ in $F$ and $F'$ is output simulating $F'$. Let
	$\reachedv{F'}{s_1y}=\{v'_t\}$. (It is a singleton set as $F'$ is deterministic.) Hence $v_1$ corresponds to
	$v'_t$ on common string $s_1y$. Similarly, each vertex $v\in V$ corresponds to
	$v'_t$ on some string ending with $y$. Let the clique corresponding to $v'_t$
	be $K_t$, and we have $K_t\supseteq V$. But that is a contradiction.
	\end{proof}
\else
\shortvspace{-4pt}
	\proofinextension
\fi

\shortvspace{-12pt}
\subsection{Correspondence of \mcca and \spfm solutions}
\shortvspace{-6pt}
To establish the equivalence between \mcca and \spfm, we will show that the
induced filter from the solution of \mcca is a minimal filter for \spfm, and the
induced clique cover from a minimal filter is a solution for \mcca.
%
%

\begin{lemma}
Minimal clique covers for \mcca induce minimal filters for~\spfm.
\label{lm:cover_to_filter}
\end{lemma}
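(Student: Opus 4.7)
The plan is a short contradiction argument that combines the two previous lemmas (\ref{lm:clique_aux_deterministic_output_simu} and \ref{lm:clique_aux}). Let $\cover{K}$ be a minimum clique cover solving \mcca\ on $\compatibilitygraph{F}$ subject to $\auxes{F}$, and let $F' = M(F, \cover{K})$. By construction $|V(F')| = |\cover{K}|$, and by Lemma~\ref{lm:clique_aux_deterministic_output_simu} $F'$ is deterministic and output simulates $F$. So $F'$ is a feasible solution for \spfm; it remains only to argue that no smaller feasible solution exists.

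Suppose toward contradiction that there is a deterministic filter $F^\star$ that output simulates $F$ with $|V(F^\star)| < |\cover{K}|$. I would form the induced collection $\cover{Q}(F, F^\star) = \{K_{v^\star_1},\dots,K_{v^\star_n}\}$ with $n = |V(F^\star)|$. Three properties need to be checked. First, each $K_{v^\star_i}$ is a clique in $\compatibilitygraph{F}$: this follows from the definition of induced clique cover together with the fact that $F^\star$ output simulates $F$, as recorded in the definition itself. Second, $\cover{Q}(F, F^\star)$ covers every vertex of $F$: because (WLOG) every $v \in V(F)$ is reachable, any string $s \in \reaching{F}{v}$ lies in $\Language{F}$, and output-simulation forces $\reachedv{F^\star}{s}$ to be some (in fact, singleton, by determinism) set $\{v^\star\}$, giving $v \in K_{v^\star}$. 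Third, $\cover{Q}(F, F^\star)$ satisfies every \auxtext\ in $\auxes{F}$, which is exactly Lemma~\ref{lm:clique_aux}.

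Thus $\cover{Q}(F, F^\star)$ is a feasible clique cover for \mcca\ on $(\compatibilitygraph{F}, \auxes{F})$ of cardinality at most $|V(F^\star)| < |\cover{K}|$, contradicting minimality of $\cover{K}$. Therefore $F'$ has the fewest states among deterministic output-simulating filters, establishing the claim.

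The only subtle step is the second property above: verifying that the induced collection really covers $V(F)$, rather than merely forming cliques. This is where the reachability reduction (assumed WLOG earlier) and the non-crashing requirement built into output-simulation do the work; everything else is straightforward bookkeeping on top of the two preceding lemmas.
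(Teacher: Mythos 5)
Your proof is correct and follows essentially the same route as the paper's: construct $F'=M(F,\cover{K})$, invoke Lemma~\ref{lm:clique_aux_deterministic_output_simu} for feasibility, and derive a contradiction from a smaller $F^\star$ via its induced clique cover and Lemma~\ref{lm:clique_aux}. Your explicit verification that $\cover{Q}(F,F^\star)$ actually covers $V(F)$ (using reachability and the non-crashing requirement of output simulation) is a detail the paper's proof passes over silently, and it is a welcome addition.
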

\ifdefined\arxiv
	\begin{proof}
	Given any minimal clique cover $\cover{K}=\{K_1, K_2, \dots, K_m\}$ as a
	solution for problem \mcca with input p-filter $F$, construct p-filter $F'=M(F,
	\cover{K})$.  Since $\cover{K}$ satisfies the \auxestext $\auxes{F}$, $F'$ is
	deterministic and output simulates $F$ according to
	Lemma~\ref{lm:clique_aux_deterministic_output_simu}.  To show that $F'$ is a
	minimal deterministic filter for \spfm, suppose the contrary.  Then there exists
	a minimal deterministic filter $F^{\star}$ with fewer states, i.e.,
	$|V(F^{\star})|<|V(F')|$. Hence, $F^{\star}$ induces a clique cover
	$\cover{K^{\star}}$ with fewer cliques than $\cover{K}$. Since $F^{\star}$ is
	deterministic, $\cover{K^{\star}}$ satisfies all $\auxes{F}$ via
	Lemma~\ref{lm:clique_aux}.  But then $\cover{K^{\star}}$ satisfies all the
	requirements to be a solution for \mcca, and has fewer cliques than $\cover{K}$,
	contradicting the assumption.
	\end{proof}
\else
\shortvspace{-8pt}
	\proofinextension
\fi
\begin{lemma}
A minimal filter for \spfm with input $F$ induces a clique cover that
solves \mcca with compatibility graph and \auxestext of~$F$.  
\label{lm:filter_to_cover}
\end{lemma}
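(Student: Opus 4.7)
The plan is to take a minimal filter $F^{\dagger}$ for \spfm with input $F$, form the induced clique cover $\cover{Q}(F, F^{\dagger})$ on the compatibility graph $\compatibilitygraph{F}$, and verify in turn the three obligations of a \mcca solution: (i) it really is a clique cover of $\compatibilitygraph{F}$; (ii) it satisfies $\auxes{F}$; and (iii) it has minimum cardinality among such covers.

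For (i), recall we assume every vertex of $F$ is reachable from an initial state. Thus for each $v\in V(F)$ there is some string $s\in \reaching{F}{v}$, and since $F^{\dagger}$ output simulates $F$ it too admits $s$; by determinism of $F^{\dagger}$ this string reaches a unique $v'\in V(F^{\dagger})$, so $v\in K_{v'}$. Hence every vertex of $F$ lies in some $K_{v'}\in\cover{Q}(F,F^{\dagger})$. That each $K_{v'}$ is a clique in $\compatibilitygraph{F}$ follows from the definition of compatibility together with the output-simulation property: any two vertices $v,w$ corresponding to the same $v'$ share all common extensions (namely those produced from $v'$ in $F^{\dagger}$), and the outputs reached by these extensions are forced to agree since $F^{\dagger}$ is deterministic and output simulates $F$. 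For (ii), I would simply invoke Lemma~\ref{lm:clique_aux}, which establishes exactly that $\cover{Q}(F,F^{\dagger})$ satisfies every \auxtext in $\auxes{F}$.

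The heart of the argument is (iii), which I would prove by contradiction using the previously established Lemma~\ref{lm:clique_aux_deterministic_output_simu} as a converse machine: suppose there were a clique cover $\cover{K}^{\star}$ of $\compatibilitygraph{F}$ satisfying $\auxes{F}$ with $|\cover{K}^{\star}|<|\cover{Q}(F,F^{\dagger})|=|V(F^{\dagger})|$. Then the induced filter $M(F,\cover{K}^{\star})$ would, by Lemma~\ref{lm:clique_aux_deterministic_output_simu}, be a deterministic filter that output simulates $F$, and it would have $|\cover{K}^{\star}|<|V(F^{\dagger})|$ states. This contradicts the minimality of $F^{\dagger}$ as a solution to \spfm, completing the proof.

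The main obstacle I anticipate is a subtle accounting issue in step (iii): one must be sure that the sizes line up, i.e., $|\cover{Q}(F,F^{\dagger})|$ equals $|V(F^{\dagger})|$ rather than being strictly smaller. This could fail only if some $v'\in V(F^{\dagger})$ had an empty $K_{v'}$, i.e., no vertex in $F$ corresponding to it. But then $v'$ would be unreachable in $F^{\dagger}$ (as any string reaching $v'$ would, by output simulation, also reach something in $F$), so $v'$ could be deleted to yield a still-smaller deterministic output simulating filter, contradicting minimality of $F^{\dagger}$. Handling this reachability bookkeeping carefully, together with appeal to the two earlier lemmas, is what makes the argument airtight.
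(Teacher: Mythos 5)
Your proposal is correct and follows essentially the same route as the paper's own proof: invoke Lemma~\ref{lm:clique_aux} for satisfaction of the \auxestext, and argue minimality by contradiction via Lemma~\ref{lm:clique_aux_deterministic_output_simu} applied to a hypothetical smaller cover. Your extra care about the cover property and the size accounting $|\cover{Q}(F,F^{\dagger})|=|V(F^{\dagger})|$ only makes explicit what the paper relegates to the definition of the induced clique cover and its surrounding remarks.
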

\ifdefined\arxiv
	\begin{proof}
	Given minimal filter $F^{\star}$ as a solution for \spfm with input filter $F$,
	we can construct a clique cover $\cover{K}=\cover{Q}(F, F^{\star})$ from the
	minimal filter. For this cover to be a solution for \mcca with compatibility
	graph $G=\compatibilitygraph{F}$ and \auxestext $\auxes{F}$, first, it must
	satisfy all constraints in $\auxes{F}$.  Lemma~\ref{lm:clique_aux} affirms this
	fact. Second, we must show it to be minimal among all the covers satisfying
	those constraints.  Supposing $\cover{K}$ is not a minimal, there must exist a
	clique cover $\cover{K'}$ with $|\cover{K'}|<|\cover{K}|$ satisfying
	$\auxes{F}$.  Then, consider the induced filter $F'=M(F, \cover{K'})$. Since
	$\cover{K'}$ satisfies all the \auxestext $\auxes{F}$, $F'$ is deterministic and
	will output simulate $F$ (Lemma~\ref{lm:clique_aux_deterministic_output_simu}).
	But $|V(F')| = |\cover{K'}| <  |\cover{K}| = |V(F^{\star})|$, contravening the
	fact that $F^{\star}$ is a minimal filter. Hence $\cover{K}$ is minimal.
	\end{proof}
\else
\shortvspace{-8pt}
	\proofinextension
\fi
\shortvspace{6pt}

Together, they establish the theorem.
\begin{theorem}
The solution for \mcca with compatibility graph and \auxestext of a
filter $F$ induces a solution for \spfm with input filter $F$, and vice versa.
\end{theorem}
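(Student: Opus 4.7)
The plan is to treat this theorem as essentially a bidirectional repackaging of the two immediately preceding lemmas, so the proof should be a short bookkeeping exercise rather than a new argument. The two lemmas (\ref{lm:cover_to_filter} and \ref{lm:filter_to_cover}) already take care of each direction separately; all that remains is to observe that the two constructions, $M(F,\cover{K})$ and $\cover{Q}(F,F')$, are exactly the ``induces'' relations mentioned in the theorem statement, and that they preserve optimality in the right way.

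For the forward direction, I would start from a minimal clique cover $\cover{K}$ solving \mcca\ on the compatibility graph $\compatibilitygraph{F}$ with \auxestext\ $\auxes{F}$. Invoking Lemma~\ref{lm:clique_aux_deterministic_output_simu}, the induced filter $F'=M(F,\cover{K})$ is deterministic and output-simulates $F$, so it is at least a feasible candidate for \spfm. Minimality of $F'$ among such filters is precisely the content of Lemma~\ref{lm:cover_to_filter}. Hence $F'$ solves \spfm\ with input $F$.

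For the reverse direction, I would start from a minimal deterministic filter $F^{\star}$ solving \spfm\ and form the induced cover $\cover{Q}(F,F^{\star})$ on $\compatibilitygraph{F}$. The cliqueness of each $K_{v'_i}$ and the fact that $\cover{Q}(F,F^{\star})$ covers $V(F)$ follow from the definition of the induced cover together with the assumption (stated just before the \mcca\ definition) that every state of $F$ is reachable. Lemma~\ref{lm:clique_aux} gives that $\cover{Q}(F,F^{\star})$ satisfies every constraint in $\auxes{F}$, and Lemma~\ref{lm:filter_to_cover} certifies that it is of minimum cardinality among such covers. Hence it solves \mcca.

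The main obstacle, if any, is purely notational: one has to be careful that the word ``induces'' in the theorem statement refers to $M(\cdot,\cdot)$ in one direction and $\cover{Q}(\cdot,\cdot)$ in the other, and that the size equalities $|V(M(F,\cover{K}))|=|\cover{K}|$ and $|\cover{Q}(F,F^{\star})|=|V(F^{\star})|$ (both noted in passing after the respective definitions) are what actually transport minimality across the correspondence. Once this is spelled out, no further machinery is needed and the proof reduces to a one-line citation of each lemma per direction.
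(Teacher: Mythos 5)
Your proposal matches the paper's proof exactly: the theorem is proved by citing Lemma~\ref{lm:cover_to_filter} for one direction and Lemma~\ref{lm:filter_to_cover} for the other, which is precisely your decomposition. Your additional remarks about the meaning of ``induces'' and the size equalities are correct bookkeeping that the paper leaves implicit.
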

\begin{proof}
Lemma~\ref{lm:cover_to_filter} and
Lemma~\ref{lm:filter_to_cover} comprise the complete result.
\end{proof}
Having established this correspondence, any 
\spfm can be solved by tackling its associated \mcca problem, the 
latter problem being cast as a SAT instance and solved via a solver
(see Section~\ref{section:reduction} for further details). 
%
%
%

\shortvspace{-11pt}
\section{Generalizing to \mpfm}
\shortvspace{-8pt}

\label{sec:mpfm}
Finally, we generalize the previous algorithm to multi-outputting filters.
In \mpfm problems, the input p-filter is deterministic but
states in the p-filter may have multiple outputs. 
One straightforward if unsophisticated
approach is to enumerate all filters under different output
choices for the states with multiple outputs, and then solve every one the
resulting deterministic single-outputting filters as instances of \spfm. The
filter with the fewest states among all the minimizers could then be treated as
a minimal one for the \mpfm problem.
\begin{figure}[ht!]
\begin{subfigure}[b]{0.55\textwidth}
\centering
\includegraphics[scale=0.7]{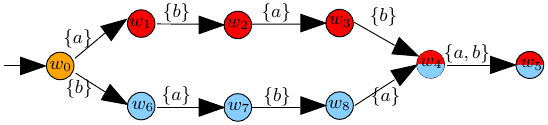}
\caption{An input filter with multi-outputting states. The colors in each
vertex represents its output.\label{fig:split_choose_input}}
\end{subfigure}
\hfill
\begin{subfigure}[b]{0.43\textwidth}
\centering
\includegraphics[scale=0.7]{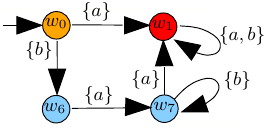}
\caption{A minimal filter when choosing to output the same color for $w_4$ and
$w_5$.\label{fig:split_same}}
\end{subfigure}
\begin{subfigure}[b]{0.55\textwidth}
\centering
\includegraphics[scale=0.7]{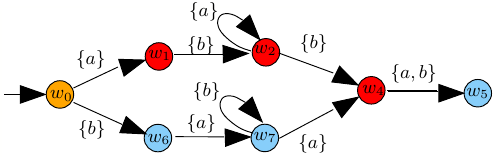}
\caption{A minimal filter when choosing to output different colors for $w_4$ and
$w_5$.\label{fig:split_diff}}
\end{subfigure}
\hfill
\begin{subfigure}[b]{0.43\textwidth}
\centering
\includegraphics[scale=0.7]{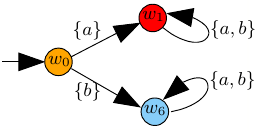}
\caption{A minimal filter for the input filter.\label{fig:split_min}}
\end{subfigure}
\caption{A multi-outputting filter minimization problem.}
\shortvspace{-16pt}
\end{figure}

Unfortunately, this is too simplistic.  Prematurely committing to an output
choice is detrimental.  Consider the input filter shown in
Figure~\ref{fig:split_choose_input}, it has two multi-outputting states ($w_4$
and $w_5$). If we choose to have both $w_4$ and $w_5$ give the same output, the
\spfm minimal filter, shown in Figure~\ref{fig:split_same}, has $4$ states. If
we choose distinct outputs for $w_4$ and $w_5$, the \spfm minimal filter, shown
in Figure~\ref{fig:split_diff}, now has $7$ states. But neither is the minimal
\mpfm filter. The true minimizer appears in Figure~\ref{fig:split_min}, with
only $3$ states. It is obtained by splitting both $w_4$ and $w_5$ into two
copies, each copy giving a different output.

The idea underlying a correct approach is that output choices should be made
together with the splitting and merging operations during filter minimization.
Multi-outputting vertices may introduce additional split operations, but
these split operations can still be treated via clique covers on the
compatibility graph. This requires that we define a new compatibility
relationship---it is only slightly more general than
Definition~\ref{def:compatibility}: 
\begin{definition}[group compatibility]
Let $F$ be a deterministic p-filter.  We say that the set of states $U=\{u_1,
u_2, \dots, u_n\}$ are \defemp{group compatible}, if there is a common output on
all their extensions, i.e., 
\shortvspace{-6pt}
\[\shortenXor{}{\small}
\forall s\in \bigcup\limits_{u \in U}
\extensions{F}{u}, 
\bigcap\limits_{w' \in W'}\!c(w')\neq \emptyset, 
\text{ where } W' = \reachedvf{F}{u_1}{s} \cup \reachedvf{F}{u_2}{s} \cup \cdots \reachedvf{F}{u_n}{s}.
\]
\end{definition}
\ifdefined\arxiv
(The preceding exploits the subtle fact, in Definition~\ref{def:ext}, that 
$\reachedvf{F}{v}{s} = \emptyset$ when tracing $s$ from $v$ crashes in $F$.)
With this definition, the compatibility graph must be generalized suitably:
\begin{definition}[compatibility simplicial complex]
Given a deterministic multi-output filter $F$, its \defemp{compatibility simplicial
complex} is a collection of simplices, where each simplex is a set of group compatible vertices in
$F$.
\end{definition}
The \auxestext are generalized too, replacing mutual compatibility
with group compatible states:
\begin{definition}[generalized \auxtext]\label{defn:g_aux_const}
In the compatibility simplicial complex of filter $F$, if there
exists a set of group compatible states $U=\{u_1, u_2,\dots, u_n\}$,
then they can only be selected to be merged if they always transition to a set
of states that are also selected to be merged. For any sets of group 
compatible states $U, W\subseteq V(G)$ and some observation $y$, we create a
\defemp{generalized \auxtext} expressed as a pair $(U,W)_y$ if $W=\{w\in
V(G)\mid y\in\tau(u,w)$ for some $u \in U\}$.  
\end{definition}

The information formerly encoded in cliques of edges is now within simplicies;
the minimum clique cover on the compatibility graph, thus, becomes a minimum
simplex cover on the compatibility simplicial complex.
Hence, the \mcca problem is generalized as follows:

\ourproblem{Generalized Minimum Cover with \auxestext (\gmca)}
{A compatibility simplicial complex $M$, a set of generalized \auxestext \Aux.}
{A minimal cardinality simplex cover of $M$ satisfying \Aux.}
\else
	With this definition, the minimization of a deterministic multi-outputting
	filter can also be written and solved as a slightly generalized \mcca problem,
	where ($i$) the compatibility graph is generalized toward a \defemp{compatibility
	simplicial complex}; ($ii$) to capture the set of vertices that can be merged,
	the clique in the \mcca problem is generalized to be a simplex in the
	\defemp{compatibility simplicial complex}; ($iii$) the \auxestext are redefined
	by replacing ``mutually compatible'' with the ``group compatible''
	states; ($iv$) the objective is to find the set of simplices that covers all
	vertices in the compatibility simplicial complex.  This generalized \mcca
	problem, termed \gmca, will be solved in the next section.

\fi
%



\shortvspace{-12pt}
\section{Reduction from \mcca and \gmca to SAT}
\label{section:reduction}
\shortvspace{-4pt}

Prior algorithms for filter minimization used multiple stages to find a set of
vertices to merge, solving a graph coloring problem repeatedly as more
constraints are identified.  In contrast, an interesting aspect of \mcca and
its generalized version is that it tackles filter minimization as a constrained
optimization problem with all constraints established upfront.  Thus the clique
(simplex) perspective gives an optimization problem which is tangible and easy
to visualize. Still, being a new invention, there are no solvers readily
available for direct use. But reducing \mcca (\gmca) to Boolean satisfaction (SAT)
enables the use of state-of-the-art solvers to find minimum cliques
(simplices).  

\ifdefined\arxiv
	Since the edges that make up the cliques can be encoded via 1-simplices,
in what follows we give the treatment of \gmca.

Next, we follow the standard practice for treating optimization problems via a
decision problem oracle, \emph{viz.} define a $k$-\gmca problem, asking for the
existence of a cover with size $k$ satisfying the \auxestext;
one then decreases $k$ to find the minimum cover. Each $k$-\gmca problem can be
written as a logic formula in conjunctive normal form (CNF), polynomial in the
size of the $k$-\gmca instance, and solved.  

Firstly, the $k$-\gmca problem is formalized as follows:

\ourproblem{Minimum simplex cover with \auxestext ($k$-\gmca)}
{A compatibility simplicial complex $M$, a set of \auxestext \Aux, maximum number
of simplices $k$} 
{A simplex cover with no more than $k$ simplices on $M$ that
satisfies all \auxestext in \Aux}

Then, we represent the simplex cover as choices to assign each vertex
$v$ in the compatibility simplicial complex to a simplex $i$, with $1\leq i\leq
k$.  To represent these choices, we create a boolean variable $R_v^i$ to
represent the fact that $v$ is assigned to simplex $i$, and its
negation $\negation{R_v^i}$ to represent its inverse. The simplex
cover is
captured by $k\times |V(M)|$ such variables.

A simplex cover for problem $k$-\gmca should guarantee that each vertex
in $M$ is assigned to at least one clique, i.e.,
\begin{equation}
\label{eq:one}
\prod_{v\in V(M)} \sum_{1\leq i\leq k} R_v^i.
\end{equation}
For simplicity, we will use ``$+$" and ``$\sum$" for logic or, ``$\cdot$" and
``$\prod$" for logic and, ``$=$" for logic equivalence.

We denote each simplex in the compatibility simplicial
complex as a set. Let the set of simplicies for compatibility 
simplicial complex $M$ be $E(M)$, which is a collection of sets.  For any set
$\{v_1, v_2, \dots, v_n\}$ that does not form a simplex in $M$, i.e.,
$\bar{S}=2^{V(M)}\setminus (E(M)\cup \{\emptyset\})$, they should never be assigned to
the same simplex:
\begin{equation}
\label{eq:two}
\prod_{\{v_1, v_2, \dots, v_n\}\in \bar{S}} \prod_{1\leq i\leq k}
(\negation{R_{v_1}^i}+\negation{R_{v_2}^i}+\dots+\negation{R_{v_n}^i}).
\end{equation}


To satisfy each \auxtext $(U, W)_y\in \Aux$ (with $|U|=m$ and $|W|=n$), if $U$
is assigned to a simplex $i$ ($1\leq i\leq k$), then there must exist
another simplex $j$ ($1\leq j\leq k$), such that all vertices in $W$ are
assigned to simplex $j$,
i.e.,
\begin{equation*}
\begin{aligned}
&\sum_{1\leq i\leq k} \prod_{u\in U}R_u^i\implies \sum_{1\leq j\leq k}
\prod_{w\in W}R_{w}^j&\\
=&\prod_{1\leq i\leq k} \sum_{u\in U}\negation{R_u^i}+\sum_{1\leq
j\leq k} \prod_{w\in W}R_{w}^j.
\end{aligned}
\end{equation*}
Let $X_i=\sum_{u\in U}\negation{R_u^i}$ and $Y=\sum_{1\leq
j\leq k} \prod_{w\in W}R_{w}^j$, then
\begin{equation*}
\begin{aligned}
\prod_{1\leq i\leq k} \sum_{u\in U}\negation{R_u^i}+\sum_{1\leq
j\leq k} \prod_{w\in W}R_{w}^j  = & X_1\cdot X_2\cdot\, \cdots\, \cdot X_k+ Y\\
 = & (X_1+ Y)\cdot (X_2+ Y)\cdot \,\cdots\, \cdot (X_k+ Y)  \\
 = & \prod_{1\leq i\leq k} (X_i+ Y).
\end{aligned}
\end{equation*}
For each $X_i+ Y$, we have 
\begin{equation*}
\begin{aligned}
Y+X_i&=\sum_{1\leq j\leq k} \prod_{w\in W}R_{w}^j+\sum_{u\in U}\negation{R_u^i}\\
&=\underbrace{R_{w_1}^1\cdot R_{w_2}^1\cdot \,\cdots\, \cdot
R_{w_n}^1}_{z^i_1}+\underbrace{R_{w_1}^2\cdot R_{w_2}^2\cdot \,\cdots\, \cdot
R_{w_n}^2}_{z^i_2}+\\
&\dots+\underbrace{R_{w_1}^k\cdot R_{w_2}^k\cdots
R_{w_n}^k}_{z^i_k}+\underbrace{\negation{R_{u_1}^i}\cdot
\text{True}}_{z^i_{k+1}}+\underbrace{\negation{R_{u_2}^i}\cdot
\text{True}}_{z^i_{k+2}}
+\dots+\underbrace{\negation{R_{u_m}^i}\cdot \text{True}}_{z^i_{k+m}}\\
&=(\sum_{1\leq \ell\leq k+m} z^i_{\ell})\cdot (\prod_{1\leq j\leq k} \prod_{1\leq
p\leq n} (\negation{z^i_{j}}+R_{w_p}^j))\cdot (\prod_{1\leq q\leq m}
(\negation{z^i_{k+q}}+\negation{R_{u_q}^i})).
\end{aligned}
\end{equation*}

Therefore, 
\begin{equation}
\label{eq:three}
\begin{aligned}
&\prod_{1\leq i\leq k} \sum_{u\in U}\negation{R_u^i}+\sum_{1\leq
j\leq k} \prod_{w\in W}R_{w}^j \\
=&\prod_{1\leq i\leq k} (X_i+Y)\\
=&\prod_{1\leq i\leq k}\Big[(\sum_{1\leq \ell\leq k+m} z^i_{\ell})\cdot (\prod_{1\leq j\leq k} \prod_{1\leq
p\leq n} (\negation{z^i_{j}}+R_{w_p}^j))\cdot (\prod_{1\leq q\leq m}
(\negation{z^i_{k+q}}+\negation{R_{u_q}^i}))\Big].
\end{aligned}
\end{equation}

To solve $k$-\gmca, we need to leverage off-the-shelf SAT solvers to find an
assignment of the variables such that formulas in (\ref{eq:one}), (\ref{eq:two})
and (\ref{eq:three}) are 
satisfied. For any vertex in the compatibility graph, we create $k$ variables. For
each \auxtext, we need to create $k(k+m)$ variables. Suppose that
there are $t$ vertices in the input filter $F$ and $x$ \auxestext.
Then, we need $tk+xk(k+m)$ variables. Similarly, (\ref{eq:one}) gives us $1$ clause for
each vertex, (\ref{eq:two}) gives us at most $k\times 2^t$ clauses in total,
(\ref{eq:three}) gives $k(k+m+nk+m)$ for each \auxtext. Hence, the number of
clauses to solve a $k$-\gmca is $t+k\times 2^t+xk((n+1)k+2m)$.

To find the minimum solution for \gmca, we will solve $k$-\gmca with $k$ equals
the number of states in the input filter, and the decrease $k$ until we cannot
find a solution for $k$-\gmca (i.e., the SAT solver times out).

	\else
	We follow the standard practice for treating optimization problems via a
    decision problem oracle, \emph{viz.} define a $k$-\mcca ($k$-\gmca) problem,
    asking for the existence of a cover with size $k$ satisfying the
    \auxestext; one then decreases $k$ to find the minimum cover. Each
    $k$-\mcca ($k$-\gmca) problem can be written as a logic formula in conjunctive normal
    form (CNF), polynomial in the size of the $k$-\mcca ($k$-\gmca) instance,
    and solved.  Detailed explanation of the CNF generation from the
	$k$-\mcca ($k$-\gmca) problem must be deferred to the extended version~\cite{zhang2020Cover}.
\fi

\shortvspace{-12pt}
\section{Experimental results}
\shortvspace{-6pt}

The method described was implemented by leveraging a Python implementation of
2018 SAT Competition winner,
MapleLCMDistChronoBT\,\cite{nadel2018maple,imms-sat18}. Their solver will
return a solution if it solves the $k$-\mcca problem before timing out. If it
finds a satisfying assignment, we decrease $k$, and try again.  Once no
satisfying assignment can be found, we construct a minimal filter from the
solution with minimum $k$.

First, as a sanity check, we examined the minimization problems for the inputs
shown in Figure~\ref{fig:motivation_input}, Figure~\ref{fig:input_nd} and
Figure~\ref{fig:split_choose_input}. Our implementation takes
\SI{0.05}{\second}, \SI{0.06}{\second} and \SI{0.26}{\second}, respectively, to
reduce those filters. All filters found have exactly the
minimum number of states, as reported above.

Next, designed a test where we could examine scalability aspects of the method.
We generalized the input filter shown in Figure~\ref{fig:split_choose_input} to produce a
family of instances, each described by two parameters: the $n\times m$-input
filter has $n$ rows and $m$ states at each row.  (Figure~\ref{fig:split_choose_input} is
the $2\times 3$ version.) Just like the original filter, the states in the same
row share the same color, but the states in different rows have different
colors. The initial state $w_0$ outputs a single unique color; the last two
states, $w_4$ and $w_5$, output any of the $n$ colors.  In this example, the
states in the same row, together with $w_4$ and $w_5$, are compatible with each
other.  

\begin{figure}[ht!]
\setlength{\belowcaptionskip}{2pt}
\setlength{\abovecaptionskip}{2pt}
\begin{subfigure}[b]{0.27\textwidth}
\setlength{\abovecaptionskip}{2pt}
\setlength{\belowcaptionskip}{2pt}
\centering
\includegraphics[scale=0.37]{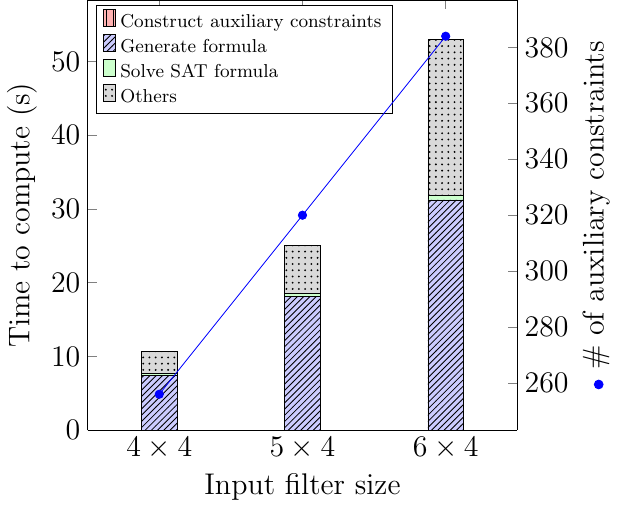}
\caption{Minimizing inputs with $4$ vertices per row.\label{fig:timing_split_4}}
\end{subfigure}
\hspace{0.45cm}
\begin{subfigure}[b]{0.27\textwidth}
\setlength{\abovecaptionskip}{2pt}
\setlength{\belowcaptionskip}{2pt}
\centering
\includegraphics[scale=0.37]{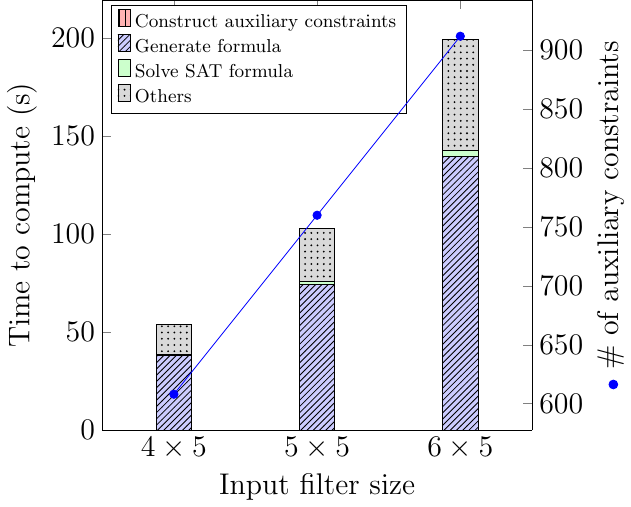}
\caption{Minimizing inputs with $5$ vertices per row.\label{fig:timing_split_5}}
\end{subfigure}
\hspace{0.55cm}
\begin{subfigure}[b]{0.27\textwidth}
\setlength{\abovecaptionskip}{2pt}
\setlength{\belowcaptionskip}{2pt}
\centering
\includegraphics[scale=0.37]{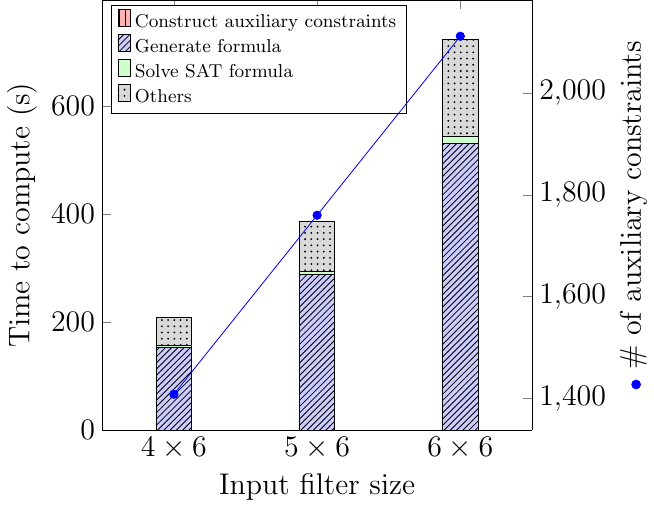}
\caption{Minimizing inputs with $6$ vertices per row.\label{fig:timing_split_6}}
\end{subfigure}
\hfill
\caption{A scalability test on the multi-outputting filter minimization
problems. Inputs are $n\times m$ parameterized instances akin to
Figure~\ref{fig:split_choose_input}.\label{fig:split}} 
\end{figure}

The time to construct the \auxestext, prepare the formulas and the
time used by the SAT solver were recorded. We also measured the number of
\auxestext found by our algorithm.  Figure~\ref{fig:split} summarizes
the data for $(n,m) \in \{4,5,6\}\times\{4,5,6\}$.  The result shows that about
\SI{70}{\percent} of the time is used in preparing the logical
formula, with the SAT solver and construction of the \auxestext accounting
for only a very small fraction of time.  


\begin{figure}
\shortvspace{-12pt}
\setlength{\belowcaptionskip}{2pt}
\setlength{\abovecaptionskip}{2pt}
\centering
\begin{subfigure}[b]{0.42\textwidth}
\setlength{\abovecaptionskip}{2pt}
\centering
\includegraphics[scale=0.9]{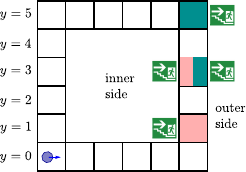}
\caption{A square environment for the robot to track the exits with the
observation indicating its current row.\label{fig:grid_scenario}}
\end{subfigure}
\hspace{0.5cm}
\begin{subfigure}[b]{0.42\textwidth}
\setlength{\abovecaptionskip}{2pt}
\centering
\includegraphics[scale=0.38]{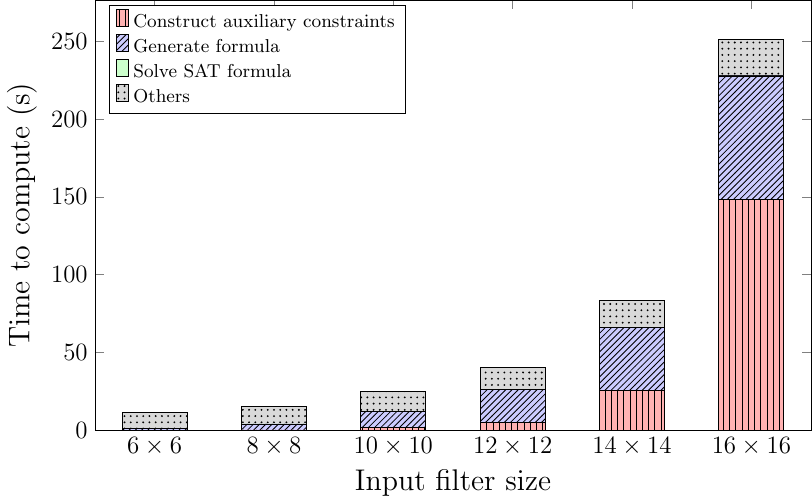}
\caption{Compute times to minimize filters with different sizes.\label{fig:grid_result}}
\end{subfigure}
\caption{Navigating to exit a square environment with a row sensor.}
\shortvspace{-18pt}
\end{figure}

In light of this, to further dissect the computational costs of different
phases, we tested a robot in the square grid environment shown in
Figure~\ref{fig:grid_scenario}. The robot starts from the bottom left cell, and
moves to some adjacent cell at each time step.  The robot only receives
observations indicating its row number at each step. We are interested in small
filter allowing the robot to recognize whether it has reached a cell with an
exit (at the inner side or outer side).  States with both inner and outer exits
have multiple outputs.  To search for a minimal filter, we firstly start with
deterministic input filters for a grid world with size $6\times 6$, $8\times
8$, $10\times 10$, $12\times 12$, $14\times 14$, $16\times 16$, and then
minimize these filters.  We collected the total time spent in different stages
of filter minimization, including the construction of \auxestext, SAT formula
generation and resolution of SAT formula by the SAT solver.  The results are
summarized visually in Figure~\ref{fig:grid_result}.

In this problem, the number of states in the input filter scales linearly with
the size of the square.  So does the minimal filter.  But the particular problem
has an important additional property: it represents a worst-case in a certain
sense because there are no \auxestext.
We do not indicate this fact to the algorithm, so the construction of \auxestext
examines many cliques, determining that none apply.  The results highlight that
the construction of the \auxestext quickly grows to overtake the time to
generate the logical formula\,---\,even though, in this case, the \auxtext set
is empty.

The preceding hints toward our direction of current research: the construction
of $\auxes{F}$ by na\"\i vely following Definition~\ref{defn:aux_const}
is costly. And, though the SAT formula is polynomial in the size of the \mcca
instance, that instance can be very large. On the other hand, the need for a
\auxtext can be detected when the output produced fails to be
deterministic. Hence, our future work will look at how to generate these
constraints lazily.

\section{Conclusion}

With an eye toward generalizing combinatorial filters, we introduced a
new class of filter, the cover filters. Then, in order to reduce the state
complexity of such filters, we re-examined earlier treatments of traditional
filter minimization; this paper has shown some prior ideas to be mistaken.
Building on these insights, we formulate the minimization problem via
compatibility graphs, examining covers comprised of cliques formed thereon. We
present an exact algorithm that generalizes from the traditional filter
minimization problem to cover filters elegantly.



\bibliographystyle{IEEEtran}
\shortenXor{\bibliography{mybib}}{\bibliography{mybibshort}}

\begin{thebibliography}{10}
\providecommand{\url}[1]{#1}
\csname url@samestyle\endcsname
\providecommand{\newblock}{\relax}
\providecommand{\bibinfo}[2]{#2}
\providecommand{\BIBentrySTDinterwordspacing}{\spaceskip=0pt\relax}
\providecommand{\BIBentryALTinterwordstretchfactor}{4}
\providecommand{\BIBentryALTinterwordspacing}{\spaceskip=\fontdimen2\font plus
\BIBentryALTinterwordstretchfactor\fontdimen3\font minus
  \fontdimen4\font\relax}
\providecommand{\BIBforeignlanguage}[2]{{%
\expandafter\ifx\csname l@#1\endcsname\relax
\typeout{** WARNING: IEEEtran.bst: No hyphenation pattern has been}%
\typeout{** loaded for the language `#1'. Using the pattern for}%
\typeout{** the default language instead.}%
\else
\language=\csname l@#1\endcsname
\fi
#2}}
\providecommand{\BIBdecl}{\relax}
\BIBdecl

\bibitem{censi17co}
A.~Censi, ``{A Class of Co-Design Problems With Cyclic Constraints and Their
  Solution},'' \emph{{IEEE Robotics and Automation Letters}}, vol.~2, no.~1,
  pp. 96--103, 2017.

\bibitem{pervan2018low}
A.~Pervan and T.~D. Murphey, ``Low complexity control policy synthesis for
  embodied computation in synthetic cells,'' in \emph{WAFR}, M\'{e}rida,
  M\'{e}xico, 2018.

\bibitem{saberifar18hardness}
F.~Z. Saberifar, J.~M. O'Kane, and D.~A. Shell, ``{The hardness of minimizing
  design cost subject to planning problems},'' in \emph{WAFR}, M\'{e}rida,
  M\'{e}xico, 2018.

\bibitem{lavalle10sensing}
S.~M. LaValle, ``Sensing and filtering: A fresh perspective based on preimages
  and information spaces,'' \emph{Found. and Trends in Rob.}, vol.~1, no.~4,
  pp. 253--372, 2010.

\bibitem{o2017concise}
J.~M. O'Kane and D.~A. Shell, ``Concise planning and filtering: hardness and
  algorithms,'' \emph{IEEE Transactions on Automation Science and Engineering},
  vol.~14, no.~4, pp. 1666--1681, 2017.

\bibitem{setlabelrss}
F.~Z. Saberifar, S.~Ghasemlou, J.~M. O'Kane, and D.~A. Shell, ``Set-labelled
  filters and sensor transformations,'' in \emph{Robotics: Science and
  Systems}, 2016.

\bibitem{saberifar18pgraph}
F.~Z. Saberifar, S.~Ghasemlou, D.~A. Shell, and J.~M. O'Kane, ``{Toward a
  language-theoretic foundation for planning and filtering},''
  \emph{International Journal of Robotics Research---WAFR'16 special issue},
  vol.~38, no. 2-3, pp. 236--259, 2019.

\bibitem{saberifar2017combinatorial}
F.~Z. Saberifar, A.~Mohades, M.~Razzazi, and J.~M. O'Kane, ``Combinatorial
  filter reduction: Special cases, approximation, and fixed-parameter
  tractability,'' \emph{Journal of Computer and System Sciences}, vol.~85, pp.
  74--92, 2017.

\bibitem{rahmani2018relationship}
H.~Rahmani and J.~M. O'Kane, ``On the relationship between bisimulation and
  combinatorial filter reduction,'' in \emph{Proceedings of IEEE International
  Conference on Robotics and Automation}, Brisbane, Australia, 2018, pp.
  7314--7321.

\bibitem{rahmani2020integer}
------, ``Integer linear programming formulations of the filter partitioning
  minimization problem,'' \emph{Journal of Combinatorial Optimization},
  vol.~40, pp. 431--453, 2020.

\bibitem{zhang2020Cover}
Y.~Zhang and D.~A. Shell, ``{Cover Combinatorial Filters and their Minimization
  Problem (Extended Version)},'' \emph{arXiv preprint arXiv:2002.07153}, 2020.

\bibitem{nadel2018maple}
A.~Nadel and V.~Ryvchin, ``{Maple\_LCM\_Dist\_ChronoBT}: featuring
  chronological backtracking,'' in \emph{Proc. of SAT Competition}, 2018, pp.
  29--30.

\bibitem{imms-sat18}
A.~Ignatiev, A.~Morgado, and J.~Marques{-}Silva, ``{PySAT:} {A} {Python}
  toolkit for prototyping with {SAT} oracles,'' in \emph{SAT}, 2018, pp.
  428--437.

\end{thebibliography}
\end{document}